\documentclass[conference,compsoc]{IEEEtran}

\ifCLASSOPTIONcompsoc
  \usepackage[nocompress]{cite}
\else
  \usepackage{cite}
\fi

\usepackage{cite}
\usepackage{amsmath,amsfonts}
\usepackage{amsthm}
\usepackage{algpseudocode}
\usepackage{graphicx}
\usepackage{textcomp}

\usepackage{url}
\usepackage{multirow}
\usepackage{hyperref}

\usepackage[ruled,vlined]{algorithm2e}
\SetKwInOut{Input}{Input}
\SetKwInOut{Output}{Output}
\SetKwBlock{Client}{Client:}{end}
\SetKwBlock{Server}{Server:}{end}

\usepackage{algpseudocode}
\usepackage{xcolor,soul}
\usepackage{makecell}
\usepackage{subcaption}

\usepackage{tikz}
\usetikzlibrary{shapes.geometric, arrows}

\usepackage{xcolor}

\definecolor{mypink}{HTML}{FB2E99}

\newcommand{\relu}{$\mathsf{ReLU}$}
\newcommand{\cmark}{\text{\ding{51}}}
\newcommand{\xmark}{\text{\ding{55}}}

\usepackage{tabularray}
\usepackage{pifont}

\newtheorem{theorem}{Theorem}
\newtheorem{lemma}{Lemma}

\usepackage{filecontents}

\begin{document}


\title{Efficient Arithmetic-and-Comparison Homomorphic Encryption with Space Switching}



\author{
Erwin Eko Wahyudi, Yan Solihin, Qian Lou$^{*}$\\
\textit{University of Central Florida}\\
\textit{\{wahyudierwin, yan.solihin, qian.lou\}@ucf.edu}
}

\maketitle

\makeatletter
\begingroup
\renewcommand{\thefootnote}{\fnsymbol{footnote}}
\renewcommand{\@makefntext}[1]{\noindent$^*$#1}
\footnotetext[1]{Qian Lou is the corresponding author: qian.lou@ucf.edu.}
\endgroup
\makeatother





\begin{abstract}

Fully homomorphic encryption (FHE) enables computation on encrypted data without decryption, making it central to privacy-preserving applications. However, no existing scheme efficiently supports both arithmetic and comparison operations in a unified framework. Prior approaches such as scheme switching and polynomial approximation face serious limitations: switching incurs prohibitive overhead for large inputs, while approximation methods introduce errors near critical points, restricting use in accuracy-sensitive tasks. We propose \emph{space switching} method to integrate arithmetic and comparison computation seamlessly within FV-style schemes. Our approach identifies that the two types of operations require different plaintext spaces and introduces two procedures: a reduction step to transition from the number space $\mathbb{Z}_{p^r}$ to the digit space $\mathbb{Z}_p$, and a modulus-raising step to map results back to $\mathbb{Z}_{p^r}$. This design enables continuous evaluation of arithmetic and comparison within the same scheme. Experiments show that our method achieves up to $17\times$ faster performance than scheme switching and $15\times$ faster than direct comparison on database workloads, demonstrating its practicality for real-world privacy-preserving computation. Code and artifacts are available at \url{https://github.com/UCF-Lou-Lab-PET/Universal-BGV}. 


\end{abstract}
\IEEEpeerreviewmaketitle
\section{Introduction}
\label{s:intro}

Fully homomorphic encryption (FHE) enables computation directly on encrypted data without decryption, making it a powerful primitive for secure computation. Since Gentry’s seminal construction~\cite{gentry2009fully}, numerous schemes have been proposed~\cite{brakerski2012fv,fan2012fv,brakerski2014bgv,ducas2015fhew,chillotti2020tfhe,cheon2017ckks}, each improving efficiency or expanding functionality, and FHE has been applied to privacy-preserving machine learning~\cite{gilad2016cryptonets,NEURIPS2019_56a3107c,lou2021hemet,lou2020glyph,lou2021safenet,lou2020falcon,feng2020cryptogru}, genomic data analysis~\cite{kim2015dna,zhang2015dna,raisaro2018dna}, and finance~\cite{armknecht2015finance,han2019finance}. Broadly, schemes fall into {\em word-wise} and {\em bit-wise} families: word-wise schemes such as BFV~\cite{brakerski2012fv,fan2012fv}, BGV~\cite{brakerski2014bgv}, and CKKS~\cite{cheon2017ckks} are highly efficient for arithmetic (e.g., matrix multiplication and convolution) and benefit from SIMD packing~\cite{smart2014simd}, but evaluating non-arithmetic functions like comparisons is very expensive. In contrast, bit-wise schemes such as FHEW~\cite{ducas2015fhew} and TFHE~\cite{chillotti2020tfhe} support logic operations via functional bootstrapping and Boolean/LUT circuits, yet cannot perform arithmetic operations efficiently (e.g., a 16-bit multiplication can take \mbox{$\sim$30 s}~\cite{lu2021pegasus}). This yields a dilemma: word-wise schemes excel at arithmetic operations but struggle with logic operations, while bit-wise excel at logic operations but struggle with arithmetic operations.

One of the most pressing logic operations is comparison, which is relied upon by many important applications. 
For example, a single database query may involve a mixture of comparison and arithmetic operations. Figure~\ref{fig:db workflow} illustrates an example query pipeline where ciphertexts are compared (via the WHERE clause) and then multiplied and summed to produce encrypted query results.~\footnote{While some specialized solutions exist~\cite{tan2020efficientsefv,boneh2013PrivateDatabaseQueries,kim2016betterSecurityForQueries,kim2016efficiencysefv} which show how word-wise FHE can filter (equality tests) and return exact matches, they are largely limited to retrieval-style queries and do not directly support aggregations such as \texttt{SUM}, which are essential for analytics.} Similarly, in privacy-preserving machine learning applications, a single inference often involves both arithmetic operations (convolutions and matrix multiplications) and comparisons (in activation layers like \relu).

\begin{figure*}[htbp]
    \centering
    \includegraphics[width=\linewidth]{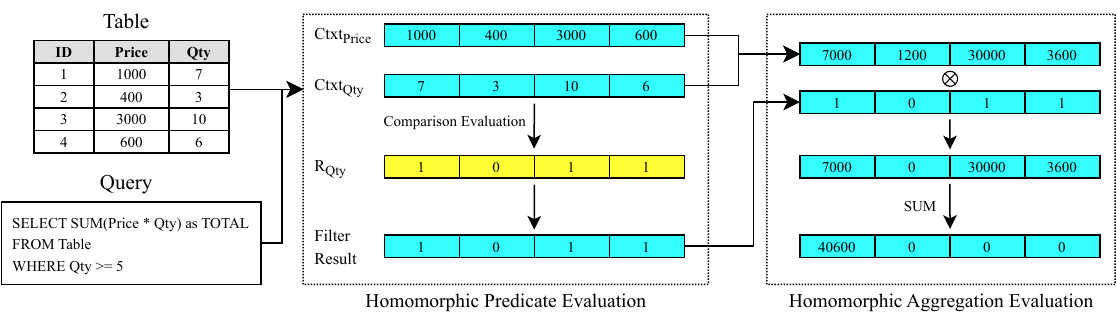}
    \caption{Database queries inherently combine comparison (e.g., filtering with predicates) and arithmetic (e.g., multiplication and aggregation), requiring both to be supported efficiently under homomorphic encryption.}
    \label{fig:db workflow}
\end{figure*}

Some specialized workarounds exist, for example a comparison can be approximated using polynomials which allows converting it into arithmetic operations. However, an approximation is not always effective. For example, in genomics applications~\cite{kim2015dna,zhang2015dna,raisaro2018dna}, exact comparisons are critical, since small errors can lead to incorrect conclusions. They may also result in non-negligible accuracy loss especially if low degree polynomials are used~\cite{gilad2016cryptonets}. Even with high-degree polynomials, approximation methods (with CKKS) are generally valid only over small intervals, such as $[-1,1]$~\cite{cheon2020approx,lee2021minimaxapprox}, and are inherently inaccurate near critical points such as zero. 

A more general-purpose approach to enabling mixed arithmetic and comparison uses homomorphic \emph{scheme switching}~\cite{boura2020chimera,lu2021pegasus}, where arithmetic is performed with word-wise schemes and comparison with bit-wise schemes, or \emph{functional bootstrapping}~\cite{lee2024functional,alexandru2025generalfunctionalbootstrapping}, where comparisons are performed via lookup tables (LUTs). While these approaches leverage the strengths of each category, scheme conversions and LUT evaluations remain prohibitively costly. Because the complexity grows exponentially with input bit-width, they typically restrict the bit-width~\cite{alexandru2025generalfunctionalbootstrapping}, support only narrow intervals such as $[-8,8]$~\cite{cheon2022domainextension}, and may require additional bootstrapping~\cite{al2022openfhe}.

Thus, we propose a new approach to address these limitations. The main idea is that comparing two numbers can be decomposed into multiple comparisons of their digits and their aggregation. Our key observation is that performing comparison at the digit level is much more efficient homomorphically because we can reduce the ciphertext {\em number space} to a much smaller {\em digit space}. Thus, in our {\em space switching} approach, we convert numbers from the number space $\mathbb{Z}_{p^r}$ to their digits, each in the base-$p$ digit space  $\mathbb{Z}_p$ (where $p^r$ represents the plaintext modulus). This greatly reduces complexity, since we deal with much smaller polynomials (of degree $p-1$ rather than $p^r-1$). Effectively, we transform a number-space comparison that incurs $O\left(p^{r/2}\right)$ multiplications, into digit-space total of $O\left(r \sqrt{pr}\right)$ multiplications. This advantage grows as the numbers being compared are enlarged. For example, if $r$ is fixed at 5, then as $p$ grows, the gap in the number of multiplications is quadratic to $p$, i.e. $O\left(p^2\sqrt{p}\right)$ vs. $O\left(\sqrt{p}\right)$


Practically, our space switching method offers multiple advantages over scheme switching. It avoids costly cross-scheme conversions between word-wise and bit-wise HE, and supports large bit-widths and arbitrary ranges, while at the same time retains the SIMD benefit of the BGV/BFV schemes. It does not require special encodings and works fully homomorphically end to end. 



We integrated our \emph{space switching} into a production-grade library (HElib), enabling {\em end-to-end} performance evaluation that includes all the overheads including those from switching between number and digit spaces. We evaluate across multiple bit-widths and representative workloads against two baselines that support arithmetic and exact comparison. For primitive operators (e.g., \texttt{LT}, \texttt{EQ}), our method achieves up to $130\times$ speedup over Morimura \emph{et~al.}~\cite{morimura2023accelerating} and up to $40\times$ speedup over scheme switching~\cite{bian2023he3db} in the amortized (per-slot) sense. In a database workload, \emph{space switching} achieves up to $17\times$ speedup over scheme switching~\cite{bian2023he3db} and up to $15\times$ over Morimura \emph{et~al.}~\cite{morimura2023accelerating}. We report complete configurations and breakdowns in Sections~\ref{s:expt}--\ref{s:results}.

To summarize, this paper makes the following contributions:
\begin{itemize}
  \item We present a novel \emph{space switching} approach that enables mixed arithmetic and comparison operations by transitioning between number and digit spaces. 
  
  \item We propose an efficient digit extraction technique to convert from number to digit space and a modulus raising technique to convert from digit to number space. The digit extraction technique is adapted from BGV/BFV bootstrapping but we adapt it in order order to remove its inefficiency. 
  
  \item We integrate our technique into the production-grade FHE library HElib, and evaluate primitive operators and a database workload across multiple bit-widths. We show that our space switching technique achieves huge speedups over the state-of-the-art scheme switching. 
\end{itemize}

\section{Related Works}

Table~\ref{tab:related work} organizes prior efforts on enabling logic (we focus on comparison) operations in homomorphic encryption into distinct categories, clarifying their underlying approaches and trade-offs. The table highlights key properties, including support for SIMD packing, arithmetic capability, exactness of comparison evaluation, scalability to large inputs, and the underlying scheme. SIMD packing is important to amortize a long latency computation over many operations, and is a key advantage of word-wise FHE schemes over bit-wise ones. Arithmetic capability is important for most FHE applications but some FHE schemes do not support them. Exactness of computation is important for many applications, especially when comparisons are involved. Scalability to large input determines how fast computation can be performed on larger bit widths; small bit widths restrict the applicability to a narrower set of applications. The table allows us to compare very different lines of work on a common basis and to identify where existing techniques fall short.

\begin{table*}[htbp]
\centering
\small
\caption{Comparison of prior works supporting comparison operations in FHE.}
\begin{tblr}{
    colspec = {| c | c | c c c c | c |},
    row{1} = {font=\fontsize{8}{11}\bfseries,rowsep=1pt},
    row{2-Z} = {font=\fontsize{8}{11},rowsep=1pt},
    }
\hline
\textbf{Approach} & \textbf{Name} & \textbf{SIMD} & \textbf{Arithmetic} & \textbf{Exact} & {\textbf{Bit-width scalability}} & \textbf{Scheme}\\
\hline\hline
\SetCell[r=2]{c}Bit-wise & TFHE~\cite{chillotti2020tfhe} & \xmark & \xmark & \cmark & {binary-circuit depth} & TFHE  \\
 & Chakraborty~\cite{chakraborty2022tfhecmp} & \xmark & \xmark & \cmark & {binary-circuit depth} & TFHE \\\hline
\SetCell[r=3]{c}Scheme Switching & Chimera~\cite{boura2020chimera} & \xmark & \cmark & \cmark & {binary-circuit depth} & BFV/CKKS/TFHE\\
Scheme Switching & Pegasus~\cite{lu2021pegasus} & \xmark & \cmark & \cmark & {binary-circuit depth} & BFV/CKKS/TFHE\\
Scheme Switching & HE$^3$DB~\cite{bian2023he3db} & \xmark & \cmark & \cmark & {binary-circuit depth} & BFV/CKKS/TFHE\\\hline
\SetCell[r=2]{c}Approximation & Cheon \textit{et al.}~\cite{cheon2020approx} & \cmark & \cmark & \xmark & {approximation polynomial degree} & CKKS\\
Approximation & Lee \textit{et al.}~\cite{lee2021minimaxapprox} & \cmark & \cmark & \xmark & {approximation polynomial degree} & CKKS\\\hline
\SetCell[r=3]{c}Interpolation & Tan \textit{et al.}~\cite{tan2020efficientsefv} & \cmark & \xmark & \cmark & {$O(r\sqrt{p})$} & BFV/BGV\\
Interpolation & Iliashenko \textit{et al.}~\cite{iliashenko2021fasterbgvsefv} & \cmark & \xmark & \cmark & {$O(r\sqrt{p})$} & BFV/BGV\\
Interpolation & Morimura \textit{et al.}~\cite{morimura2023accelerating} & \cmark & \cmark & \cmark & {$O(p^{r/2})$} & BFV/BGV\\\hline
\SetCell[r=3]{c}Functional Bootstrapping & Liu \textit{et al.}~\cite{liu2023amortizedfunctionalbootstrapping} & \cmark & \cmark & \cmark & {LUT/domain size} & BFV/BGV\\
 & Lee \textit{et al.}~\cite{lee2024functional} & \cmark & \cmark & \cmark & {LUT/domain size} & BFV/BGV/CKKS\\
 & Alexandru \textit{et al.}~\cite{alexandru2025generalfunctionalbootstrapping} & \cmark & \cmark & \xmark & {LUT/domain size} & CKKS\\\hline
\SetCell[r=1]{c}Others & XCMP~\cite{lu2018non} & \xmark & \xmark & \cmark & {input-domain size} & BFV/BGV\\
\hline
\SetCell[c=2]{c}\textbf{Ours (Space Switching)} & & \cmark & \cmark & \cmark & {$O(r\sqrt{pr})$} & BFV/BGV\\
\hline
\end{tblr}
\label{tab:related work}
\end{table*}

{\bf Bit-wise approach}.
Bit-wise schemes~\cite{ducas2015fhew,chillotti2020tfhe,chakraborty2022tfhecmp} 
in general have very low arithmetic performance. For instance, multiplying two 16-bit integers in TFHE can take up to 30 seconds~\cite{lu2021pegasus}. While LUT-based bootstrapping enables accurate comparison, relying solely on bit-wise FHE becomes impractical for workloads dominated by arithmetic~\cite{lou2019she, lou2021safenet, jiang2022matcha, lou2021hemet}.

{\bf Scheme switching}.
Homomorphic scheme switching performs arithmetic operations with word-wise schemes and switches to bit-wise schemes to perform comparison~\cite{boura2020chimera,lu2021pegasus}. The switching time is very costly and increases exponentially with the input bit-width, hence they typically limit the bit-width and only support narrow intervals such as $[-8,8]$~\cite{cheon2022domainextension}. Each switch is often followed by bootstrapping~\cite{al2022openfhe}, which is one of the most expensive operations in word-wise HE and is generally avoided in practice~\cite{iliashenko2021fasterbgvsefv, lou2019glyph, zhang2023hebridge, xue2026sok}.

{\bf Polynomial approximation}. 
For the CKKS scheme, {\em polynomial approximation} is a popular method, which approximates a comparison function with polynomials. Several works~\cite{cheon2019approx,cheon2020approx,lee2021minimaxapprox,zhang2025cipherprune, zhang2024heprune} approximate comparison functions using either high-degree polynomials or compositions of lower-degree ones. The comparison function is fundamental, as it forms the basis for many comparison primitives such as step and sign functions, and can also be used to construct LUTs for arbitrary non-arithmetic functions~\cite{cheon2024tree}. However, these approximation methods are generally valid only over small intervals, such as $[-1,1]$~\cite{cheon2020approx,lee2021minimaxapprox}, and are inherently inaccurate near critical points such as zero. Achieving higher accuracy and precision generally requires higher polynomial degree and larger CKKS scales, increasing depth, noise, and runtime, yet a zero-margin cannot be certified, so exact correctness is not guaranteed. Such errors make them unsuitable for applications requiring exact correctness, including genomics~\cite{kim2015dna,zhang2015dna,raisaro2018dna} and finance~\cite{han2019finance,armknecht2015finance}.

{\bf Polynomial interpolation}. 
For BFV/BGV schemes, exact comparison can be supported via {\em polynomial interpolation}. Prior works~\cite{tan2020efficientsefv,iliashenko2021fasterbgvsefv} have demonstrated that comparison functions can be expressed as interpolation polynomials over integers modulo $p$. Moreover, since FV-style schemes support SIMD, the cost of these comparisons can be amortized across many slots, promising to yield per-integer latencies comparable to or even better than those in bit-wise schemes~\cite{iliashenko2021fasterbgvsefv}. However, the state-of-the-art comparison method~\cite{iliashenko2021fasterbgvsefv, yudha2024boostcom} relies on special plaintext encodings that prevent interpolation-based comparison from being used together with arithmetic operations in the same scheme~\cite{cheon2020approx,morimura2023accelerating}. Another potential approach is to construct comparison polynomials directly in the native plaintext number space $\mathbb{Z}_{p^r}$. However, the degree of such polynomials grows exponentially with $r$ (up to $p^r$), making direct comparison evaluation in $\mathbb{Z}_{p^r}$ infeasible.

\textbf{Functional bootstrapping}.
Recently, several functional bootstrapping methods~\cite{alexandru2025generalfunctionalbootstrapping,lee2024functional,liu2023amortizedfunctionalbootstrapping} have been proposed to allow arbitrary function computation during the bootstrapping process. Functional bootstrapping refreshes ciphertexts and applies a chosen function simultaneously, effectively removing noise while completing an LUT evaluation. Specifically, Lee \textit{et al.}~\cite{lee2024functional} proposed a functional bootstrapping technique that takes an RLWE ciphertext, i.e., CKKS or BFV, as input and outputs the refreshed BFV ciphertext, building upon BFV-style bootstrapping. However, such BFV-style bootstrapping is far from efficient; for instance, it can take about 3 minutes for a single evaluation due to the intrinsically inefficient slot utilization of BFV-style bootstrapping~\cite{alexandru2025generalfunctionalbootstrapping}. To address this issue, Alexandru \textit{et al.}~\cite{alexandru2025generalfunctionalbootstrapping} presented a general functional bootstrapping technique based on CKKS-style bootstrapping, which has the best throughput among all FHE methods, to improve amortized performance. However, this method suffers from high computational complexity for large input spaces. Their experimental results are limited to 12-bit LUTs because the scaling factor nears the limit for 64-bit modular operations, and the complexity of polynomial evaluation increases significantly, i.e., approximately 1 minute per evaluation for 9-bit LUTs, but about 10 minutes for 12-bit LUTs.

\textbf{Others}.
There are specialized methods targeting comparison, but they are limited in generality or efficiency. For example, XCMP~\cite{lu2018non} encodes numbers into exponents to facilitate comparison, but this design precludes SIMD packing and does not support arithmetic.

{\bf Overall}, our space switching enables arithmetic and comparison operations within a single FHE scheme (BFV/BGV), hence it retains its SIMD advantage. Due to a much higher efficiency, it supports exact comparison even with large inputs.

\section{Background}
\label{s:background}


\subsection{The BGV Scheme}

The BGV scheme is a lattice-based cryptographic construction built on the Ring Learning with Errors (RLWE) assumption~\cite{brakerski2014bgv}. Although mathematically complex, RLWE is a fundamental problem that underpins the security of the scheme. The main parameters of BGV are summarized in Table~\ref{tab:bgv_parameters}, including $p$, $r$, $m$, $n$, and $Q$. The value $p^r$ specifies the plaintext modulus, whereas $Q$ defines the ciphertext modulus. In practice, $Q$ is chosen much larger than $p$


\begin{table}[htbp]
    \centering
    \caption{Parameters used in BGV.}
    \label{tab:bgv_parameters}
    \begin{tblr}{
    colspec = {| c | l |},
    row{1} = {font=\fontsize{8}{11}\bfseries,rowsep=1pt},
    row{2-Z} = {font=\fontsize{8}{11},rowsep=1pt},
    }
    \hline
        \textbf{Parameter} & \textbf{Description} \\ \hline\hline
        $p^r$ & Plaintext coefficient modulus. \\
        $m$ & Order of the cyclotomic ring. \\
        $n$ & Degree of the cyclotomic polynomial. \\
        $L$ & Maximum multiplicative level. \\
        $Q$ & Product of prime moduli: $Q = \prod_{i=0}^L q_i$. \\
        $\lambda$ & Security parameter of a BGV instance. \\
        \hline
    \end{tblr}
\end{table}

In this setting, \( \Phi_m(x) \) denotes the $m$-th cyclotomic polynomial of degree $n$. The relation between $m$ and $n$ is determined by the Euler totient function \( \phi \), with \( n = \phi(m) \). Earlier works often restricted $n$ to powers of two for simplicity, but more recent research suggests that non-power-of-two values of $n$ yield better performance and stronger security flexibility~\cite{Gentry2012FHEPolylog, iliashenko2021fasterbgvsefv}. The polynomial ring \( R_{p^r} \) is defined as \( \mathbb{Z}_{p^r}[x] / (\Phi_m(x)) \), consisting of polynomials modulo $\Phi_m(x)$ with coefficients in \( \mathbb{Z}_{p^r} \). Likewise, \( R_Q \) is defined with modulus $Q$. In the BGV scheme, plaintexts are encoded in \( R_{p^r} \), while ciphertexts are represented in \( R_{Q} \).

{Let $d=\textsf{Ord}(m,p)$ denote the multiplicative order of $p$ modulo $m$, i.e. the smallest positive integer $d$ such that $p^d \equiv 1 \pmod{m}$.} According to the algebraic structure described in~\cite{smart2014simd}, the cyclotomic polynomial \( \Phi_m(x) \) can be decomposed into $\ell$ irreducible factors as $\Phi_m(x) = \prod_{i=1}^\ell F_i(x) \pmod{p^r}$,
where $\ell = \phi(m)/d$ and each $F_i(x)$ has degree $d$. Consequently, the quotient ring can be represented as
\[R_{p^r} = \mathbb{Z}_{p^r}[x]/\Phi_m(x) \;\cong\; \mathbb{Z}_{p^r}[x]/F_1(x) \otimes \cdots \otimes \mathbb{Z}_{p^r}[x]/F_\ell(x).\]
{That means a plaintext slot is an element of the extension field
$\mathbb{Z}_{p^r}[x]/F_i(x)$, represented as a polynomial of degree at
most $d$, which may in particular be a constant value in $\mathbb{Z}_{p^r}$. The parameter $\ell$ also corresponds to the number of
available plaintext slots in BGV. These slots enable SIMD-style
packing, where each slot represents one plaintext value.
Homomorphic addition and multiplication are performed on the
ciphertext in $R_Q$, while the corresponding slot values are
recovered in the plaintext space after decryption.}


In the BGV scheme, the message $\mathbf{m}$ is encoded as a polynomial from $R_{p^r}$. The secret key \( \mathbf{s} \in R_Q \) is a {``small''} polynomial, with each coefficient drawn from the set $\{-1, 0, 1\}$. The ciphertext corresponding to $\mathbf{m}$ is $c=(\mathbf{c}_0,\mathbf{c}_1)$ with $\mathbf{c}_0,\mathbf{c}_1\in R_Q$, satisfying
\begin{equation}
\label{eq: bgv}
\mathbf{c}_0 + \mathbf{c}_1 \cdot \mathbf{s} \;\equiv\; \mathbf{m} + p^r \mathbf{e} \pmod{Q}.
\end{equation}

Having the plaintext $\mathbf{m}$ and a secret key $\mathbf{s}$, the encryption process begins by sampling a polynomial \( \mathbf{c}_1 \) from \( R_Q \) uniformly. The polynomial \( \mathbf{c}_0 \) in \( R_Q \) is then computed based on \( \mathbf{c}_1 \) and an additional random polynomial $\mathbf{e}$, which is generated from a predefined Gaussian distribution. This computation results in the ciphertext pair \( \mathbf{c}_0 \) and \( \mathbf{c}_1 \), as shown in Algorithm~\ref{alg:encryption_algo}. Decryption is performed by Algorithm~\ref{alg:decryption_algo}: given a ciphertext $c=(\mathbf{c}_0, \mathbf{c}_1)$, it recovers the plaintext $\mathbf{m}$.

{The parameters $L$ and $Q$ determine the supported multiplicative depth in FHE, i.e., the maximum number of sequential multiplications that can be applied to a ciphertext along any computation path. Multiplicative depth counts the longest chain of sequential multiplications, rather than the total number of multiplications. For example, computing $R = (ab)(cd)$ requires three multiplications in total, but its multiplicative depth is only two: one level to compute $R_1 = ab$ and $R_2 = cd$, and a second level to compute $R = R_1R_2$. If the multiplicative depth of a ciphertext exceeds the supported level $L$, then decryption becomes invalid, meaning that the decrypted result no longer matches the original plaintext.}

\begin{algorithm}[t]
\caption{BGV Encryption}
\label{alg:encryption_algo}
\SetAlgoLined
\LinesNumbered
\KwIn{plaintext $\mathbf{m} \in R_{p^r}$, secret key $\mathbf{s} \in R_Q$}
\KwOut{ciphertext $c=(\mathbf{c}_0,\mathbf{c}_1) \in R_Q^2$}
$\mathbf{c}_1 \leftarrow \text{sample from }(R_Q)$\;
$\mathbf{e} \leftarrow \text{sample from } R_p$\;
$\mathbf{c}_0 \leftarrow \mathbf{m} + p^r \cdot \mathbf{e} - \mathbf{c}_1 \cdot \mathbf{s}$\;
\Return $c=(\mathbf{c}_0,\mathbf{c}_1)$\;
\end{algorithm}

\begin{algorithm}[t]
\caption{BGV Decryption}
\label{alg:decryption_algo}
\SetAlgoLined
\LinesNumbered
\KwIn{ciphertext $c=(\mathbf{c}_0,\mathbf{c}_1) \in R_Q^2$, secret key $\mathbf{s} \in R_Q$}
\KwOut{plaintext $\mathbf{m} \in R_{p^r}$}
$\mathbf{m}^* \leftarrow \mathbf{c}_0 + \mathbf{c}_1 \cdot \mathbf{s}$\;
$\mathbf{m} \leftarrow \mathbf{m}^* \bmod p^r$\;
\Return $\mathbf{m}$\;
\end{algorithm}

\subsection{Polynomial Interpolation}
The non-arithmetic operations are typically evaluated via polynomial interpolation in the field $\mathbb{Z}_p$ in the FV-style HE schemes. The interpolation polynomials are key to the comparison function~\cite{tan2020efficientsefv, iliashenko2021fasterbgvsefv} as well as other non-arithmetic functions like modulo, Hamming weight and division~\cite{iliashenko2021mod, morimura2023accelerating}. The construction of these interpolation polynomials relies on important theorems and lemmas that we will describe below.
\begin{theorem}[Fermat’s Little Theorem on $\mathbb{Z}_p$]
\label{lm: fermat}
Let $p$ be a prime, for any $a\in \mathbb{Z}_p \setminus \{0\}$, we have 
\[
a^{p-1} = 1 \pmod{p}.
\]
\end{theorem}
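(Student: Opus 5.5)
The plan is the classical group-theoretic argument, since $\mathbb{Z}_p$ is a field when $p$ is prime. First I will show that $\mathbb{Z}_p^{*}=\mathbb{Z}_p\setminus\{0\}$ is a group under multiplication modulo $p$ of order $p-1$. Closure holds because $p$ prime means $p\mid ab$ forces $p\mid a$ or $p\mid b$, so a product of nonzero residues is nonzero. Inverses exist by B\'ezout: for $\gcd(a,p)=1$ there are integers $u,v$ with $ua+vp=1$, so $u$ is the inverse of $a$ modulo $p$. This is the only step that actually uses primality, and it is the one I would state most carefully.

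Next, I fix $a\in\mathbb{Z}_p\setminus\{0\}$ and consider the map $x\mapsto ax$ on $\mathbb{Z}_p^{*}$. It is injective, since $ax=ay$ implies $x=y$ after multiplying by $a^{-1}$. Hence it is a bijection of the finite set $\mathbb{Z}_p^{*}$ onto itself. Taking the product of all elements in two ways gives
\[
\prod_{x=1}^{p-1} x \;=\; \prod_{x=1}^{p-1} (ax) \;=\; a^{p-1}\prod_{x=1}^{p-1} x \pmod{p}.
\]

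Finally, $P=\prod_{x=1}^{p-1}x=(p-1)!$ is a product of nonzero residues, so it is nonzero and hence invertible modulo $p$ by the first step. Cancelling $P$ yields $a^{p-1}\equiv 1 \pmod{p}$, as claimed.

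As an alternative route, Lagrange's theorem applied to the cyclic subgroup $\langle a\rangle$ shows that the order of $a$ divides $p-1$, which gives the same conclusion. I would nonetheless use the rearrangement argument above because it is self-contained. I expect no real obstacle: the only subtle point is invertibility, needed both to show the multiplication map is a bijection and to cancel $(p-1)!$, and primality handles both.
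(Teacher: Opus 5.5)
Your proof is correct and complete. The paper gives no proof to compare against: it quotes Fermat's Little Theorem as standard background, used only to justify the equality polynomial $F_{EQ}(x,y)=1-(x-y)^{p-1}$ and the interpolation lemma that follows.

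Your route is the standard self-contained one: $x\mapsto ax$ is a bijection of $\mathbb{Z}_p^{*}$, you compare the two products, and you cancel the invertible $(p-1)!$. The Lagrange argument you mention as an alternative is equally valid. The rearrangement proof only needs Bézout, while the Lagrange route assumes basic group theory.

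One small remark on your claim that invertibility is the only place primality enters. Your closure step as written also uses primality, through Euclid's lemma. But closure follows directly from the Bézout inverses you construct: if $ab\equiv 0$ with $a$ invertible, then $b\equiv a^{-1}ab\equiv 0$. So Euclid's lemma is not needed separately, and invertibility really is the single point where primality is used.
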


With Fermat’s Little Theorem in Theorem \ref{lm: fermat}, we can evaluate the equality function easily over $\mathbb{Z}_p$. The equality check for $x,y \in \mathbb{Z}_p$ can be computed as

\begin{equation}
\label{eq: equality function}
F_{EQ}(x,y) = 1-(x-y)^{p-1} = 
\begin{cases} 
    1 & \text{if } x = y \\
    0 & \text{otherwise}
\end{cases}.
\end{equation}

Based on the equality function, any multi-variable function can be evaluated via a multi-variate interpolation polynomial according to the following lemma:
\begin{lemma}
\label{lm: interpolate}
Every $n$-variable function $f:\mathbb{Z}_p^n \rightarrow \mathbb{Z}_p$ is a polynomial function represented by a $n$-variate polynomial $P_f(X_1,...,X_n)$ of degree at most $p-1$ in each variable:
\[
P_f(X_1,...,X_n)=\sum_{\mathbf{a}\in\mathbb{F}_p^n}f(\mathbf{a})\prod_{i=1}^n(1-(X_i-a_i)^{p-1}),
\]
where $a_i$ is the $i$-th coordinate of the length-$n$ vector $\mathbf{a}$.
\end{lemma}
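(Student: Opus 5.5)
The plan is to prove the identity pointwise: it suffices to show that for every input $\mathbf{b}=(b_1,\dots,b_n)\in\mathbb{Z}_p^n$ the polynomial $P_f$ evaluates to $f(\mathbf{b})$. The degree bound can then be read off directly from the formula.

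The first step is the one-variable indicator. By Theorem~\ref{lm: fermat}, for $x,a\in\mathbb{Z}_p$ the quantity $(x-a)^{p-1}$ equals $1$ when $x\neq a$. It equals $0$ when $x=a$, with the convention $0^{p-1}=0$, which holds since $p-1\ge 1$. Hence $1-(x-a)^{p-1}$ is exactly the equality indicator $F_{EQ}(x,a)$ from~\eqref{eq: equality function}.

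The second step is to take products over the $n$ coordinates. Because each factor is $0$ or $1$, the product $\prod_{i=1}^n(1-(b_i-a_i)^{p-1})$ equals $1$ if $b_i=a_i$ for all $i$, that is if $\mathbf{a}=\mathbf{b}$, and equals $0$ otherwise. Substituting $X_i=b_i$ into $P_f$, every term in the sum over $\mathbf{a}\in\mathbb{Z}_p^n$ vanishes except the one with $\mathbf{a}=\mathbf{b}$. That term contributes $f(\mathbf{b})\cdot 1$, so $P_f(\mathbf{b})=f(\mathbf{b})$ for all $\mathbf{b}$. Thus $f$ is the polynomial function given by $P_f$.

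For the degree claim, each summand is a constant times a product of $n$ factors. The $i$-th factor $1-(X_i-a_i)^{p-1}$ involves only $X_i$ and has degree exactly $p-1$ in it. So each summand has degree at most $p-1$ in every variable, and a sum of such polynomials keeps this bound.

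There is no real obstacle here. The only point needing care is the case $x=a$ in the first step, where Fermat's theorem does not apply and we must use $0^{p-1}=0$. We should also note that the sum ranges over $\mathbb{F}_p^n=\mathbb{Z}_p^n$, which requires $p$ prime so that $\mathbb{Z}_p$ is a field. If desired, we can add that this representative is unique among polynomials with degree at most $p-1$ in each variable. The reason is a counting argument: both the space of such reduced polynomials and the space of functions $\mathbb{Z}_p^n\to\mathbb{Z}_p$ have dimension $p^n$. The map from polynomials to functions is surjective by the construction above, so it is bijective.
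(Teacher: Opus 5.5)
Your argument is correct and matches the reasoning the paper relies on. The paper gives no separate proof; it motivates the lemma through the Fermat-based equality indicator $F_{EQ}$ in \eqref{eq: equality function}, where the product over coordinates isolates the single term $\mathbf{a}=\mathbf{b}$, exactly as you do, and your treatment of the case $x=a$ and the uniqueness remark via the $p^n$ dimension count are sound additions the paper does not spell out.
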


With Lemma \ref{lm: interpolate}, we know that for a uni-variate function $f:\mathbb{Z}_p \rightarrow \mathbb{Z}_p$, the uni-variate interpolation polynomial has the following form:
\[
P_f(X) = \sum_{a=0}^{p-1} f(a)\bigl(1 - (X-a)^{p-1}\bigr).
\]

The above interpolation polynomial has at most degree $(p-1)$. Such a univariate polynomial of degree $(p-1)$ can be computed with at most $\sqrt{p-1}$ non-scalar multiplications thanks to the following theorem and the Paterson-Stockmeyer algorithm:
\begin{theorem}[from ~\cite{paterson1973number}]
\label{lm: paterson}
Any polynomial of degree $d$ over a ring can be evaluated using $O(\sqrt{d})$ non-scalar multiplications and $\lceil \log_2 d \rceil + 1$ multiplicative levels.
\end{theorem}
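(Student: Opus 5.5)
We prove the Paterson--Stockmeyer bound of Theorem~\ref{lm: paterson} constructively, by a baby-step/giant-step split of the polynomial. Let $f(X)=\sum_{i=0}^{d} c_i X^i$ with scalar coefficients $c_i$ in the ring. Here only products of two ciphertext-dependent quantities count as \emph{non-scalar} multiplications. Additions and multiplications by the constants $c_i$ are free in this count and consume no multiplicative level.

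Fix $k=\lceil\sqrt{d+1}\,\rceil$ and $t=\lceil (d+1)/k\rceil$, so that $kt\ge d+1$ and $t=O(\sqrt d)$. Grouping the coefficients in blocks of length $k$ gives
\[
f(X)=\sum_{j=0}^{t-1} g_j(X)\,\bigl(X^{k}\bigr)^{j},\qquad g_j(X)=\sum_{i=0}^{k-1} c_{jk+i}X^{i},
\]
where $c_i=0$ for $i>d$. The evaluation then proceeds in three steps.

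\begin{enumerate}
\item \emph{Baby steps.} Compute $X^2,\dots,X^k$. This costs $k-1$ non-scalar products. If each $X^i$ is formed as $X^{\lfloor i/2\rfloor}\cdot X^{\lceil i/2\rceil}$, then $X^i$ has depth $\lceil\log_2 i\rceil$.
\item \emph{Block polynomials.} Every $g_j(X)$ is now a scalar linear combination of the baby-step powers, so forming all of them costs no non-scalar multiplication.
\item \emph{Giant steps.} Combine the blocks by Horner's rule in $Y=X^k$:
\[
f=\Bigl(\cdots\bigl(g_{t-1}Y+g_{t-2}\bigr)Y+\cdots\Bigr)Y+g_0 .
\]
This uses $t-1$ non-scalar products.
\end{enumerate}

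The total is $(k-1)+(t-1)=O(\sqrt d)$ non-scalar multiplications. If $k$ is taken to be a power of two, the rounding changes only constants.

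The main obstacle is the depth. Plain Horner in $Y$ multiplies $t$ times in sequence, which gives depth about $\log_2 k+t$. That is far more than the claimed $\lceil\log_2 d\rceil+1$.

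To fix this, I would replace the Horner step by a recursive split. Take $k$ a power of two and compute the giant powers $Y^{2^i}=X^{k2^i}$ by repeated squaring. This adds $O(\log(d/k))$ products and reaches depth $\log_2$ of the exponent. Then split $f=q\cdot X^{k2^{i}}+r$ at the largest such power, and recurse on $q$ and $r$. By induction, a piece of degree $<k2^{i}$ is evaluated at depth at most $\lceil\log_2(k2^i)\rceil$. The single combining product then adds one level, which gives the final bound $\lceil\log_2 d\rceil+1$. The product count in this recursion obeys $T(2^{i}k)=2T(2^{i-1}k)+1$, whose solution is $O(d/k)$ on top of $O(k)$ baby steps. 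Balancing with $k\approx\sqrt d$ keeps the total at $O(\sqrt d)$.

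The delicate points are the bookkeeping when $d+1$ is not a power of two and checking that the depth induction holds exactly with the ceiling. If the exact constant proves awkward, I would prove depth $\lceil\log_2 d\rceil+O(1)$ and tighten it afterwards. As a direct consequence, the univariate interpolation polynomial $P_f$ of degree $p-1$ from Lemma~\ref{lm: interpolate} can be evaluated with $O(\sqrt{p})$ non-scalar multiplications.
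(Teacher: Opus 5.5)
The paper gives no proof of this statement. It imports it from \cite{paterson1973number}, so your argument is a self-contained substitute rather than a reconstruction, and it is correct.

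Your closing hedge is unnecessary. With $k$ a power of two there are no ceilings to manage:
\begin{itemize}
\item $X^{k2^j}$ sits at depth exactly $\log_2 k+j$.
\item Pieces of degree $<k$ use only $X^0,\dots,X^{k-1}$, so they sit at depth at most $\log_2 k$.
\item The claim that every piece of degree $<k2^j$ has depth at most $\log_2 k+j$ closes exactly under your split.
\end{itemize}
Splitting $f$ at the largest $k2^i\le d$ then yields depth $\lfloor\log_2 d\rfloor+1=\lceil\log_2(d+1)\rceil\le\lceil\log_2 d\rceil+1$. The product count is $(k-1)+O(\log(d/k))+(2^{i+1}-1)=O(k+d/k)$, because $2^{i+1}\le 2d/k$.

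Paterson and Stockmeyer's own construction instead writes $f=(X^{k2^{j}}+c(X))\,q(X)+r(X)$, with $\deg c<k$ chosen so that $q$ and $r$ are monic of equal degree. This roughly halves the giant-step cost and gives $\sqrt{2d}+O(\log d)$ products. Your plain coefficient split gives up that constant factor, but it buys three things:
\begin{itemize}
\item It is simpler.
\item It needs no monic normalization or padding to degree $k(2^m-1)$. This matters over $\mathbb{Z}_{p^r}$, where the leading coefficient need not be a unit.
\item It makes the depth clause explicit, and that clause is what the paper relies on when budgeting levels.
\end{itemize}

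One point on conventions: you charge no level for multiplication by the constants $c_i$, which the paper's informal definition of depth leaves open. The bound still holds under the stricter convention. There, every piece of degree $<2^J$ has depth at most $J+1$. When $d=2^J$, your top-level combining step is just the scalar product $c_d\cdot X^{d}$. So the $+1$ in the statement is exactly the slack needed to absorb scalar multiplications.
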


We illustrate the interpolation process using the less-than-zero ($LT0$) function over $\mathbb{Z}_p$, defined as
\begin{equation}
\label{eq: lt values}
LT0(x) =  
\begin{cases} 
    1 & \text{if } x < 0, \\
    0 & \text{otherwise}.
\end{cases}
\end{equation}
The mapping between inputs and outputs for the $LT0$ function is shown in Table~\ref{tab:lt0 values}.

\begin{table}[h]
    \centering
    \caption{The truth table for the $LT0$ function over $\mathbb{Z}_p$.}
    \label{tab:lt0 values}
        \begin{tblr}{
    colspec = {|c||c|c|c|c|c|c|c|c|},
    row{1} = {font=\fontsize{8}{11},rowsep=1pt},
    row{2-Z} = {font=\fontsize{8}{11},rowsep=1pt},
    }
        \hline
       $x$  &  $-\frac{p-1}{2}$ & $\ldots$ & -1 & \textbf{0} & 1 & $\ldots$ & $\frac{p-1}{2}$ \\\hline
       $LT0(x)$  & 1 & $\ldots$ & 1 & \textbf{0} & 0 & $\ldots$ & 0 \\\hline
    \end{tblr}
\end{table}

According to Theorem~\ref{lm: fermat}, we can interpolate $LT0$ by summing the equality polynomials for all negative inputs:
\[F_{LT0}(x) = \sum_{i=-(p-1)/2}^{-1} F_{EQ}(x, i).\]
By Lemma~\ref{lm: interpolate}, the resulting polynomial $F_{LT0}(x)$ has degree $p-1$, which can be evaluated in $\sqrt{p-1}$ non-scalar multiplications. This construction shows how the $LT0$ function can be expressed as a polynomial suitable for homomorphic evaluation.

\section{FV-Style Comparison Challenges}
\label{s:challenges}

In this section, we present a systematic analysis of the challenges of performing comparisons, such as less-than (LT) and equality (EQ), in FV-style schemes. Section~\ref{sec:method limit logic fv} establishes inherent limitations for comparisons over $\mathbb{Z}_{p^r}$, and Section~\ref{sec:method limit special ptxt encoding} reviews special plaintext encodings for comparisons proposed in~\cite{tan2020efficientsefv,iliashenko2021fasterbgvsefv}, highlighting their incompatibility with standard arithmetic within the same scheme.


\subsection{Limitations of Comparison in $\mathbb{Z}_{p^r}$}
\label{sec:method limit logic fv}

The BGV and BFV schemes support SIMD-style computation by utilizing slots, where each slot contains an integer $a \in \mathbb{Z}_{p^r}$. Under this representation, arithmetic operations such as addition and multiplication are well-defined and yield the correct results modulo $p^r$.

The situation is different for comparison. For example, consider the equality test. To determine whether two integers $a, b \in \mathbb{Z}_{p^r}$ are equal, one can equivalently test whether $a-b=0$. This motivates the definition of the equality function $EQ$, where $EQ(x,0)=1$ for $x=0$ and $EQ(x,0)=0$ otherwise. Over $\mathbb{Z}_p$ (i.e., $r=1$), the $EQ$ function can be expressed by a polynomial of degree $p-1$ as shown in Eq.~\eqref{eq: equality function}. However, over $\mathbb{Z}_{p^r}$ with $r>1$, two challenges arise. First, a direct comparison of $EQ$ over the entire domain $\mathbb{Z}_{p^r}$ requires a polynomial of degree $p^r-1$, which is infeasible in practice due to the exponential growth in degree and the corresponding multiplicative depth. Second, by the theory of polyfunctions~\cite{geelen2023boot}, for $r>1$ the equality predicate over $\mathbb{Z}_{p^r}$ is not a polyfunction, meaning that it cannot be represented as a polynomial with integer coefficients over $\mathbb{Z}_{p^r}$.

This limitation is not restricted to equality. The less-than (LT) function over $\mathbb{Z}_{p^r}$, defined by
\begin{equation*}
\mathrm{LT}(x,0) =
\begin{cases}
    1 & \text{if } x < 0,\\
    0 & \text{otherwise},
\end{cases}
\end{equation*}
suffers the same challenges as the EQ function. Therefore, both EQ and LT in the domain $\mathbb{Z}_{p^r}$, particularly for $r>1$, are nontrivial and cannot be performed using polynomial interpolation. This reveals a fundamental obstacle to supporting comparison directly in the $\mathbb{Z}_{p^r}$ domain and motivates the use of the $\mathbb{Z}_p$ domain, where such comparisons can be expressed via interpolation polynomials over $\mathbb{Z}_p$.

\subsection{Limitations of Plaintext Encoding for Comparison}
\label{sec:method limit special ptxt encoding}

Prior works have proposed special encoding techniques to enable comparison on large numbers~\cite{tan2020efficientsefv, iliashenko2021fasterbgvsefv}. For a number $a \in \mathbb{Z}_{p^r}$, the first step is to decompose $a$ into its base-$p$ digits $(a_0, a_1, \ldots, a_{r-1}) = \sum_{i=0}^{r-1} a_i p^i$, where each digit $a_i$ lies in $\mathbb{Z}_p$. This decomposition is carried out in cleartext before encryption. The digits are then encoded as a polynomial over the extension field $\mathbb{F}_{p^d}$, namely $\sum_{i=0}^{d-1} a_i X^i$, which serves as the encoded plaintext. As a result, each ciphertext slot holds a polynomial that compactly represents an entire multi-digit number. To perform comparisons, the coefficients of $\sum_{i=0}^{d-1} a_i X^i$ are extracted homomorphically~\cite{iliashenko2021fasterbgvsefv}, producing ciphertexts corresponding to the individual digits $a_0, a_1, \ldots, a_{r-1}$ with $a_i \in \mathbb{Z}_p$. Once the digits are extracted in encrypted form, comparison polynomials defined over $\mathbb{Z}_p$ can be applied to evaluate functions such as equality or less-than.

Although this special plaintext encoding supports comparison, it does not naturally extend to general arithmetic operations such as multiplication and addition that produce integers larger than the modulus. This limitation arises because numbers are expressed in vector form, and the arithmetic between specially encoded numbers follows the rules of polynomial rings rather than integers. For example, consider $p=5$ and $d=3$. Integers can be decomposed into base-$5$ digits. Suppose $x, y \in \mathbb{Z}_{5^3}$ with $x=33$ and $y=19$. Under the special encoding, $x$ is represented by its base-$5$ digits $x_i \in \mathbb{Z}_5$, i.e., $\{x_0, x_1, x_2\} = \{3,1,1\}$, and encoded as $3X^0 + 1X^1 + X^2$. Similarly, $y$ is encoded as $4X^0 + 3X^1 + 0X^2$. Arithmetic on these encodings is carried out in the polynomial ring $\mathbb{Z}_5[X]/(X^3+1)$. Adding $x$ and $y$ yields $x+y = 2X^0 + 4X^1 + X^2$. Decoding this result gives $47$, which does not match the expected integer addition result of $33+19=52$ modulo $5^3$.

In summary, the special encoding is effective for applying comparisons over digits, but it is not compatible with standard integer arithmetic within the same ciphertext layout. Mixing arithmetic with such encodings would require re-encoding or additional conversions, which breaks the simplicity of the word-wise workflow. This motivates a different approach that keeps plaintexts as integers in $\mathbb{Z}_{p^r}$ for arithmetic, temporarily reduces to $\mathbb{Z}_p$ only to evaluate comparisons, and then lifts the result back to $\mathbb{Z}_{p^r}$, which we detail in Section~\ref{s:method}.

\section{Method}
\label{s:method}

This section presents our proposed method for interleaving arithmetic and comparison within a single FV-style scheme. {The main challenge is connecting the number space $\mathbb{Z}_{p^r}$ used for arithmetic and the digit space $\mathbb{Z}_{p}$ used for comparison.} Section~\ref{sec:method mod} specifies the BGV extension that enables arithmetic over $\mathbb{Z}_{p^r}$ along with comparison evaluation. Section~\ref{sec:method decomp} details a homomorphic reduction from the original plaintext space $\mathbb{Z}_{p^r}$ to the digit space $\mathbb{Z}_p$. Comparisons are then performed efficiently in the digit space $\mathbb{Z}_p$ via polynomial interpolation (Section~\ref{sec:method befv}). Finally, Section~\ref{sec:method lift} describes a modulus-raising step that restores the plaintext space from the digit space $\mathbb{Z}_{p}$ back to $\mathbb{Z}_{p^r}$, allowing continuous computation.

\subsection{Connecting Arithmetic and Comparison}
\label{sec:method mod}

Section~\ref{sec:method limit logic fv} shows that comparisons in the plaintext space $\mathbb{Z}_{p^r}$ are nontrivial when $r>1$. A natural alternative is to perform both arithmetic and comparison in $\mathbb{Z}_p$~\cite{morimura2023accelerating}. However, efficient evaluation of interpolation polynomials over $\mathbb{Z}_p$ typically requires small primes, which limits the practicality for large inputs. For example, $p$ ranges from $2$-$7$ in~\cite{tan2020efficientsefv}, is at most $173$ in~\cite{iliashenko2021fasterbgvsefv}, and at most $257$ in~\cite{morimura2023accelerating}. Restricting all computation to $\mathbb{Z}_p$ is therefore either unable to handle large numbers or incurs exponential overhead as the input bit-width increases. This motivates an alternative approach that separates arithmetic and comparison into their natural domains, $\mathbb{Z}_{p^r}$ for arithmetic and $\mathbb{Z}_p$ for comparison, while providing an efficient conversion mechanism between them.

\begin{figure}[htbp]
    \centering
    \includegraphics[width=\linewidth]{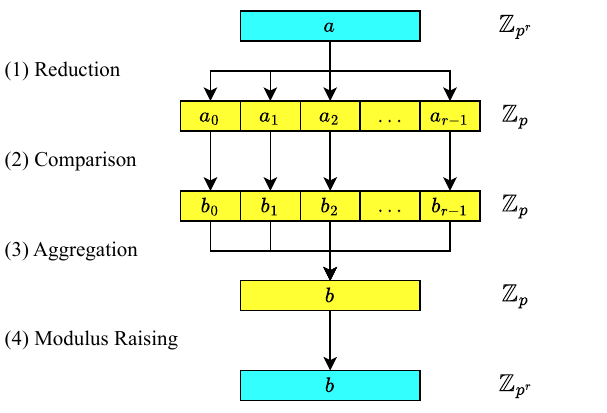}
    \caption{Overview of continuous evaluation of arithmetic operations over $\mathbb{Z}_{p^r}$ and comparisons over $\mathbb{Z}_p$.}
    \label{fig:workflow}
\end{figure}

As illustrated in Figure~\ref{fig:workflow}, the proposed method enables both arithmetic and comparison operations within the BGV scheme. We start with an integer $a \in \mathbb{Z}_{p^r}$, which may be either a raw input or the result of a prior arithmetic computation. To prepare $a$ for comparison, we first apply the reduction step, which produces the digit decomposition $\{a_i\}_{i=0}^{r-1}$ with each $a_i \in \mathbb{Z}_p$. Once represented in the digit space $\mathbb{Z}_p$, comparison functions such as equality and less-than can be efficiently evaluated on a per-digit basis using polynomial interpolation, yielding intermediate results $b_i$. These per-digit outputs are then aggregated into a final comparison result $b$. Since $b$ resides in the digit space $\mathbb{Z}_p$, we apply modulus raising so that $b$ is expressed in $\mathbb{Z}_{p^r}$, ensuring the result is usable in subsequent arithmetic computations.

Figure~\ref{fig:workflow} is straightforward to implement in plaintext. We now describe how the workflow in Figure~\ref{fig:workflow} is realized in ciphertext. {We start from a single ciphertext in $R_Q$ with $\ell$ packed plaintext slots, where each slot represents a plaintext value in $\mathbb{Z}_{p^r}$.} The reduction stage produces $r$ new ciphertexts, where every slot in the $i$-th ciphertext represents the $i$-th base-$p$ digit of the original slot value. Slot positions are preserved, so slot $k$ in every reduced ciphertext corresponds to the same original slot $k$. Comparison is then evaluated digit by digit within the digit space $\mathbb{Z}_p$. For each digit position $i$, we evaluate the predicate on the $i$-th reduced ciphertext, producing an indicator ciphertext whose slots contain the per-slot result for digit $i$. These $r$ indicator ciphertexts are then aggregated homomorphically into a single packed ciphertext whose slots contain the final comparison result (for example, $0$ or $1$) for each original slot. Finally, modulus raising re-embeds that packed comparison result from $\mathbb{Z}_p$ into $\mathbb{Z}_{p^r}$ without changing the packing layout, so it can be used immediately by subsequent arithmetic. All steps act slotwise and preserve SIMD packing.

This approach addresses the challenge of combining arithmetic and comparison operations within a single FV-style scheme. It leverages the native plaintext space $\mathbb{Z}_{p^r}$ for efficient arithmetic while utilizing the digit space $\mathbb{Z}_p$ for comparison evaluation. The reduction and modulus-raising steps that enable this integration are described in detail in the following sections.

\subsection{Reduction from $\mathbb{Z}_{p^r}$ to $\mathbb{Z}_{p}$}
\label{sec:method decomp}
The reduction in FV-style schemes from $\mathbb{Z}_{p^r}$ to $\mathbb{Z}_{p}$ amounts to extracting, for a value $a \in \mathbb{Z}_{p^r}$, its $r$ base-$p$ digits $\{a_i \in \mathbb{Z}_p\}_{i=0}^{r-1}$. This procedure is closely related to the digit-extraction techniques used in BGV/BFV bootstrapping~\cite{gentry2012boot,halevi2021boot,chen2018boot,Geelen2023BootstrappingRevisited,geelen2023boot}. Conceptually, it is equivalent to recursively extracting the least-significant digit (LSD) from the input. In both settings, the central challenge is to extract digits homomorphically. Two widely used approaches are the Halevi-Shoup method~\cite{halevi2021boot} and the Chen-Han method~\cite{chen2018boot}.

Define $a_{i,j}$ for any integer whose least significant digit in base $p$ is $a_i$, with the next $j$ least significant digits all equal to zero. Formally, this means $a_{i,j} = a_i \pmod{p^{j+1}}$. Halevi-Shoup~\cite{halevi2021boot} introduced a polynomial $F_p$ based on the following lemma:

\begin{lemma}[Corollary 5.5 in~\cite{halevi2021boot}]
\label{lm: fe}
For every prime \( p \) and \( e \geq 1 \) there exists a degree-\( p \) polynomial \( F_p \) such that for every integer \( z_0, z_1 \) with \( z_0 \in [p] \) and every \( 1 \leq t \leq e \), we have
\[
F_{p}(z_0 + p^{t} z_1) = z_0 \pmod{p^{t+1}}.
\]
\end{lemma}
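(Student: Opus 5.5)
The natural candidate is $F_p(X)=X^p$ when $p$ is odd; the case $p=2$ needs separate handling. We will show by induction on $t$ that $(z_0+p^t z_1)^p \equiv z_0^p \pmod{p^{t+1}}$. Expanding binomially,
\[
(z_0+p^t z_1)^p = z_0^p + \sum_{k=1}^{p}\binom{p}{k} z_0^{p-k} p^{tk} z_1^k .
\]
For $1\le k\le p-1$ the coefficient $\binom{p}{k}$ is divisible by $p$, so the term is divisible by $p^{1+tk}$, hence by $p^{t+1}$. The term $k=p$ is divisible by $p^{tp}$, and $tp\ge t+1$ since $p\ge 2$ and $t\ge1$. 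This gives $(z_0+p^tz_1)^p\equiv z_0^p \pmod{p^{t+1}}$, and the point is that this does not depend on $z_1$.

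Pure $X^p$ does not yet return $z_0$ itself: Fermat's Little Theorem (Theorem~\ref{lm: fermat}) only gives $z_0^p\equiv z_0 \pmod p$, not modulo $p^{t+1}$. So the plan is to correct the construction. Since $z_0$ ranges over the finite set $[p]=\{0,\dots,p-1\}$, I would build $F_p$ as a polynomial $G$ composed with $X^p$, that is $F_p = G(X^p)$.

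First I check that the values $z_0^p$, for $z_0\in[p]$, are pairwise distinct modulo $p^{e+1}$. They are already distinct modulo $p$, because $z_0^p\equiv z_0$. Lagrange interpolation over $\mathbb{Z}_{p^{e+1}}$ then produces a polynomial $G$ with $G(z_0^p)=z_0$. The nodes differ by units (their differences are nonzero mod $p$), so the interpolation denominators are invertible. This yields a polynomial with $F_p(z_0+p^tz_1)\equiv z_0 \pmod{p^{t+1}}$ for every $t\le e$.

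The catch, and the main obstacle, is the degree. The composition $G(X^p)$ has degree up to $p(p-1)$, whereas the lemma claims degree exactly $p$. Here I would follow Halevi--Shoup's actual argument and look for $F_p$ of the form $X^p + p\,H(X)$, adjusting $H$ so that the $O(p)$ interpolation conditions on $[p]$ hold. I would try to solve for the coefficients of $H$ digit by digit via Hensel-style lifting from modulus $p^{j}$ to $p^{j+1}$. At each step the correction lives in a $\mathbb{Z}_p$-linear system indexed by $z_0\in[p]$, solvable by a polynomial of degree $<p$. I also need the property $F_p(z_0+p^tz_1)\equiv F_p(z_0)$ to persist, which requires checking that the derivative-type terms $p\,H'(z_0)p^t z_1$ vanish modulo $p^{t+1}$. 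They do, since they carry a factor $p^{t+1}$.

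Verifying that this lifting closes with total degree $p$ uniformly in $t\le e$ is the step I expect to be delicate. If it fails, I would fall back on citing Corollary 5.5 of~\cite{halevi2021boot} directly, since the paper states the lemma as taken from there.
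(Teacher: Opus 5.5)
Your binomial congruence $(z_0+p^tz_1)^p\equiv z_0^p \pmod{p^{t+1}}$ is correct. Your $G(X^p)$ construction validly proves the weaker statement with degree up to $p(p-1)$. But the actual content of the lemma, \emph{degree exactly $p$}, is left open. You set up the right ansatz $F_p=X^p+pH(X)$, then only say you would try a digit-by-digit Hensel lifting, call it possibly failing, and fall back on citation. Citation is all the paper itself does: it gives no proof and defers to Halevi--Shoup. So as written, your proposal does not prove the statement.

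The missing observation is that no lifting is needed. The requirement $F_p(z_0)\equiv z_0 \pmod{p^{e+1}}$ for $z_0\in[p]$ reads $pH(z_0)\equiv z_0-z_0^p \pmod{p^{e+1}}$. By Fermat, $p$ divides $z_0-z_0^p$, so this is equivalent to $H(z_0)\equiv (z_0-z_0^p)/p \pmod{p^e}$. That is a single interpolation problem on $p$ nodes whose pairwise differences are units, exactly the situation you already handled for $G$. Lagrange interpolation over $\mathbb{Z}_{p^e}$, with coefficients lifted to $\mathbb{Z}$, gives $H$ of degree at most $p-1$. Hence $F_p=X^p+pH$ is monic of degree $p$. (Your Hensel scheme would in fact close, since each step is again an interpolation mod $p$ on $[p]$, but it is unnecessary.) Uniformity in $t$ is automatic: the condition at $z_0$ holds modulo $p^{e+1}$, hence modulo $p^{t+1}$ for every $t\le e$.

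For the persistence step, the first-order term $pH'(z_0)p^tz_1$ is not the whole story. Use instead that for $H\in\mathbb{Z}[X]$ the difference $H(a)-H(b)$ is divisible by $a-b$. Then $p^t$ divides $H(z_0+p^tz_1)-H(z_0)$, so $pH(z_0+p^tz_1)\equiv pH(z_0) \pmod{p^{t+1}}$. Together with your binomial congruence this gives $F_p(z_0+p^tz_1)\equiv F_p(z_0)\equiv z_0 \pmod{p^{t+1}}$, which completes the proof.

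Finally, your opening remark has the cases reversed. $X^p$ itself works only for $p=2$, because $z_0\in\{0,1\}$ satisfies $z_0^2=z_0$. In that case the construction above yields $H=0$, so no separate case is needed.
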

We refer to~\cite{halevi2021boot} for details of the explicit construction of $F_p$.

Moreover, the polynomial $F_p$ also satisfies $F_p(a_{i,j}) = a_{i,j+1}$ for every $j \le r$. In other words, this construction enables us to compute a valid $a_{i,j+1}$ from any given $a_{i,j}$ by zeroing out one additional digit at each step. {For example, assuming $a \in \mathbb{Z}_{p^4}$, we have $F_p(a_{i,1}) = F_p(\overline{xy0a_i}) = \overline{x00a_i}$.} We will show how to use this polynomial to obtain digits.

As a toy example, consider $r=4$, so that $a \in \mathbb{Z}_{p^4}$ can be decomposed as $a = \sum_{i=0}^{3} a_i p^i$ with $a_i \in \mathbb{Z}_p$. In Figure~\ref{fig:digit decomp example}, starting with $a = a_{0,0}$ on the first row, we repeatedly compute $a_{0,i+1} = F_p(a_{0,i})$ until we obtain $a_{0,3} = \overline{000a_0}$. This completes the computation for the first row. In the second row, we compute $a_{1,0} = (a - a_{0,1})/p$ and then repeat the same procedure as in the first row. In the third row, we compute $a_{2,0} = \big((a - a_{0,2})/p - a_{1,1}\big)/p$ and again apply the same procedure used in the previous rows. The same process is repeated for the fourth row. At the end, the last entry of each row corresponds to one digit of $a$; specifically, the last element in the $i$-th row gives $a_{i-1}$.

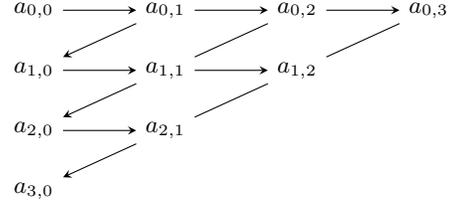
\begin{figure}[htbp]
    \centering
    \begin{tikzpicture}[>=stealth, node distance=0.5cm]
  \node (w00) at (0,0) {\small $a_{0,0}$};
  \node (w01) at (1.75,0) {\small $a_{0,1}$};
  \node (w02) at (3.5,0) {\small $a_{0,2}$};
  \node (w03) at (5.25,0) {\small $a_{0,3}$};

  \node (w10) at (0,-0.8) {\small $a_{1,0}$};
  \node (w11) at (1.75,-0.8) {\small $a_{1,1}$};
  \node (w12) at (3.5,-0.8) {\small $a_{1,2}$};

  \node (w20) at (0,-1.6) {\small $a_{2,0}$};
  \node (w21) at (1.75,-1.6) {\small $a_{2,1}$};

  \node (w30) at (0,-2.4) {\small $a_{3,0}$};

  \draw[->] (w00) -- (w01);
  \draw[->] (w01) -- (w02);
  \draw[->] (w02) -- (w03);

  \draw[->] (w10) -- (w11);
  \draw[->] (w11) -- (w12);

  \draw[->] (w20) -- (w21);

  \draw[->] (w01) -- (w10);
  
  \draw[->] (w11) -- (w20);
  \draw[->] (w21) -- (w30);
  
  \draw[-] (w02) -- (w11);
  \draw[-] (w03) -- (w12);
  \draw[-] (w12) -- (w21);
\end{tikzpicture}

    \caption{Example of digit decomposition in $\mathbb{Z}_{p^4}$.}
    \label{fig:digit decomp example}
\end{figure}

The Chen-Han method~\cite{chen2018boot} uses a similar approach to Halevi-Shoup, with the main difference being in how the last number of each row is computed. In Chen-Han, they introduced another polynomial $G_{p,e}$ based on the following lemma:

\begin{lemma}[Section 3.2 in~\cite{chen2018boot}]
\label{lm:ge}
For every prime \( p \) and \( e \geq 1 \), there exists a polynomial \( G_{p,e} \) of degree at most \((e-1)(p-1)+1\) such that for every integer $z_0$, $z_1$ with $-p/2 < z_0 \le p/2$, we have
\[
G_{p,e}(z_0 + pz_1) = z_0 \pmod{p^e}.
\]
\end{lemma}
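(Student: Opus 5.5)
We construct $G_{p,e}$ by interpolation over $\mathbb{Z}_{p^e}$. Write $S=\{z_0\in\mathbb{Z} : -p/2<z_0\le p/2\}$, the balanced residue system mod $p$. The required map $z\mapsto z_0$ sends an integer to its balanced residue mod $p$. Since $G_{p,e}$ will have integer coefficients, its value mod $p^e$ depends only on $z \bmod p^e$. So it suffices to exhibit an integer polynomial $G$ with
\[
G(z_0+pz_1)\equiv z_0 \pmod{p^e}\quad\text{for all } z_0\in S,\ z_1\in\mathbb{Z}.
\]

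\textbf{Step 1 (idempotent polynomial).} Let $P(X)=\prod_{a\in S}(X-a)$. This polynomial has degree $p$ and satisfies $P(X)\equiv X^p-X \pmod p$. Set
\[
G(X)=X-\sum_{k=1}^{e-1} c_k\,P(X)^k ,
\]
where the $c_k$ are rational constants, each $p$-integral, to be chosen. Its degree would be $p(e-1)$, which is slightly too large. To control the degree we instead work with a lifted Teichmüller-type construction: a polynomial $T(X)$ of degree at most $(e-1)(p-1)+1$ such that $T(z_0+pz_1)\equiv z_0 \pmod{p^e}$. We build it inductively in $e$, which is the approach of Chen--Han.

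\textbf{Step 2 (induction on $e$).} For $e=1$ take $G_{p,1}(X)=X$, which has degree $1$. Suppose $G_{p,e}$ works modulo $p^e$ and has degree at most $(e-1)(p-1)+1$. For $z=z_0+pz_1$, the difference $G_{p,e}(z)-z_0$ is divisible by $p^e$. The correction term $\delta(z)=(G_{p,e}(z)-z_0)/p^e \bmod p$ should depend only on $z \bmod p^{e+1}$, and in fact only on $z_0$ and on $z_1 \bmod p^{e}$. We then look for a correction of the form $p^e\cdot H(z)$ with $H$ a polynomial of degree at most $p-1$ more than the previous degree. The key point is that we exploit the structure $z_1$-dependence: expanding $G_{p,e}(z_0+pz_1)$ by Taylor in $pz_1$, each derivative term carries a power of $p$. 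A counting argument then shows that the obstruction modulo $p^{e+1}$ is a polynomial in $z=z_0+pz_1$ whose reduction is a function of $z \bmod p$ only, to leading order. Any function $\mathbb{Z}_p\to\mathbb{Z}_p$ is given by a polynomial of degree at most $p-1$ by Lemma~\ref{lm: interpolate}. Multiplying the correction by an appropriate factor raises the degree by $p-1$, giving the bound $e(p-1)+1$.

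\textbf{Main obstacle.} The delicate step is showing that the residual error at level $e+1$ depends only on $z \bmod p$, so that a degree-$(p-1)$ interpolant suffices and the degree grows by exactly $p-1$ per level. One must also ensure that the coefficients, which involve division by $p^e$ and factorials from the Taylor expansion, remain $p$-integral. For this I would use the Lucas/Legendre estimate $v_p(k!)\le (k-1)/(p-1)$ to show that the terms $\frac{(pz_1)^k}{k!}G^{(k)}$ contribute in a controlled way. If this bookkeeping fails, the fallback is to cite the explicit construction of~\cite{chen2018boot}, Section~3.2, as the statement itself does.
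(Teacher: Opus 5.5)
The paper gives no proof of this lemma. It is imported from Section~3.2 of \cite{chen2018boot}, and the reader is referred there for the construction. Your closing fallback is therefore exactly the paper's treatment. The self-contained argument you sketch before it, however, has a genuine gap.

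\textbf{The gap.} Step~1 is abandoned, so everything rests on one claim in Step~2: that the level-$(e+1)$ residual $\delta(z)=(G_{p,e}(z)-z_0)/p^e \bmod p$ is a function of $z \bmod p$. Only then would a correction $p^eH$, with $H$ the degree-$(p-1)$ interpolant of Lemma~\ref{lm: interpolate}, remove it. This fails already at $e=1$.
\begin{itemize}
\item Every admissible $G_{p,1}$ has the form $\alpha X+\beta$ with $\alpha\equiv 1$ and $\beta\equiv 0 \pmod p$. Then $\delta(z_0+pz_1)\equiv z_1+\frac{\alpha-1}{p}z_0+\frac{\beta}{p} \pmod p$, which depends on $z_1 \bmod p$, i.e.\ on $z \bmod p^2$.
\item A correction $p^eH$ with $H$ $p$-integral (denominators prime to $p$) changes the residual by $H(z) \bmod p$. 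That is a function of $z\bmod p$ only, so it can never cancel such a residual.
\end{itemize}
The failure is structural, not a matter of `leading order'. If the residual could be cancelled that way, you would get $\deg G_{p,2}\le p-1$. But modulo $p^2$ the requirement forces $G'(z_0)\equiv 0 \pmod p$ for every $z_0$. A polynomial of degree $<p$ inducing the identity on $\mathbb{Z}_p$ is congruent to $X$ modulo $p$, so it has $G'\equiv 1$, a contradiction. The correction actually needed is $p^eH$ with $H$ \emph{not} $p$-integral. For $p=3$ one needs $G_{3,2}=X-3\binom{X}{2}+6\binom{X}{3}$, so $H=-\binom{X}{2}+2\binom{X}{3}$ has leading coefficient $1/3$. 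Your phrase ``multiplying the correction by an appropriate factor'' gives no way to produce such an $H$. Your Legendre estimate is also misplaced: for $p$-integral $G$ the Taylor coefficients $G^{(k)}(z_0)/k!$ are automatically $p$-integral, so no factorials need controlling there.

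\textbf{A route that works.} Drop the induction and the monomial/Taylor picture, and expand directly in the binomial basis; this is where the Legendre bound belongs.
\begin{itemize}
\item Let $f(x)=x-z_0(x)$, where $z_0(x)$ is the representative of $x$ in $(-p/2,p/2]$. Then $\Delta f(x)=f(x+1)-f(x)=p\,\chi(x)$, with $\chi$ the indicator of $x\equiv\lfloor p/2\rfloor \pmod p$. Hence $\Delta^kf(0)=p\,\Delta^{k-1}\chi(0)$.
\item Writing $\chi$ through $p$-th roots of unity gives, for $j\ge 1$, $\Delta^{j}\chi(0)=\frac1p\sum_{\zeta^p=1,\,\zeta\neq 1}\zeta^{-\lfloor p/2\rfloor}(\zeta-1)^{j}$. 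Since $v_p(\zeta-1)=\frac{1}{p-1}$, this yields $v_p(\Delta^kf(0))\ge\lceil\frac{k-1}{p-1}\rceil$ (this is Fleck's congruence).
\item Since $\binom{x}{k}\in\mathbb{Z}$ for integer $x$, every term of the Newton series $f(x)=\sum_k\Delta^kf(0)\binom{x}{k}$ with $k\ge(e-1)(p-1)+2$ vanishes modulo $p^e$.
\item Legendre's bound $v_p(k!)\le\frac{k-1}{p-1}$ makes each surviving $\Delta^kf(0)/k!$ $p$-integral.
\end{itemize}
Hence $G_{p,e}(X)=X-\sum_{k=1}^{(e-1)(p-1)+1}\Delta^kf(0)\binom{X}{k}$ has $p$-integral coefficients and degree at most $(e-1)(p-1)+1$. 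It agrees with $z_0(x)$ modulo $p^e$ for $0\le x<p^e$, and hence for all integers $x$, because both sides are $p^e$-periodic modulo $p^e$.
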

We refer to~\cite{chen2018boot} for details of the explicit construction of $G_{p,e}$.

Using Lemma~\ref{lm:ge}, we can apply the polynomial $G_{p,e}$ directly to the input, in this case, the first number of a row, to obtain the last number in the same row. Compared to Halevi-Shoup, this method is more efficient because the degree of $G_{p,e}$ is $(p-1)(e-1)+1$, whereas in Halevi-Shoup it is $p^{e-1}$ due to repeated evaluations of the polynomial $F_p$ over $e-1$ steps.

Another digit-extraction approach is proposed by Geelen et al.~\cite{geelen2023boot}. They use the family $\{G_{p,t}\}_{t\ge 1}$ to compute intermediate values within the same row directly from the input which is the first number in the row. For example, with the first row with first element $a$, they set
\[
a_{0,1} \leftarrow G_{p,2}(a), \quad
a_{0,2} \leftarrow G_{p,3}(a), \quad
a_{0,3} \leftarrow G_{p,4}(a), \ \ldots
\]
so that all subsequent entries share the same input $a$ rather than chaining through $F_p$.

They also give two practical optimizations for evaluating $G_{p,e}$. First, they show a parity property: $G_{p,e}$ is even when $p=2$ and odd when $p$ is odd, so it can be written using only even (resp.\ only odd) exponents. In particular, for odd $p$ one can write $G_{p,e}(x)=x\cdot H(x^2)$ for some polynomial $H$ of degree $(p-1)(e-1)/2$, enabling reuse of a single $x^2$ and an efficient baby-step/giant-step evaluation of $H(x^2)$. Second, they apply a lattice-based coefficient selection to replace $G_{p,e}$ by an equivalent representative with smaller integer coefficients, which reduces ciphertext-noise growth during homomorphic evaluation.

In the context of bootstrapping, where only a few digits need to be removed (to realign noise and scale) rather than all digits, the Chen-Han or Geelen et al.\ methods are advantageous over the Halevi-Shoup method because they avoid computing every intermediate value between $a_{i-1,0}$ and $a_{i-1,r-i+1}$ for each row $i$ and reduce the number of polynomial evaluations. In the context of digit decomposition, however, the situation is different: the first element of each row depends on values from previous rows that lie on the same diagonal, which means that all intermediate numbers must be computed.


In contrast to the previous methods, we propose a digit-decomposition procedure that does not rely on intermediate values. Consider the first row in Figure~\ref{fig:digit decomp example} as an example. Let $b_{i-1}$ denote the first element of row $i$. For the first row, $b_0=a$. To compute $a_{0,3}$, the Halevi-Shoup method requires three evaluations of $F_p$, or alternatively two evaluations of $F_p$ and one evaluation of $G_{p,4}$ using the Chen-Han method; in Geelen et~al., it needs one evaluation each of $G_{p,4}$, $G_{p,3}$, and $G_{p,2}$. In contrast, our method computes $a_{0,3}$ directly as $a_{0,3}=G_{p,4}(b_0)$, avoiding intermediate values and requiring only a single evaluation of $G_{p,4}$. Once $a_{0,3}$ is obtained, the first element of the second row is $b_1=(b_0-a_{0,3})/p$. The same process is then applied to the remaining rows, giving $b_2=(b_1-a_{1,2})/p$ and $b_3=(b_2-a_{2,1})/p$. An illustration of this procedure is provided in Figure~\ref{fig:digit decomp prop example}. As before, the last element in the $i$-th row gives $a_{i-1}$, where $\{a_i\}_{i=0}^{r-1}$ are the digits of $a$.  {Note that, except for the first row, the first numbers of the corresponding rows in
Figure~\ref{fig:digit decomp example} and Figure~\ref{fig:digit decomp prop example}
come from different constructions and therefore need not coincide numerically.
However, in both constructions, the last element in each row yields the same digit output.}

\begin{figure}[htbp]
    \centering
    \begin{tikzpicture}[>=stealth, node distance=0.5cm]
  \node (w00) at (0,0) {\small $b_0=a$};
  \node (w01) at (1.75,0) [text=lightgray] {\small$a_{0,1}$};
  \node (w02) at (3.5,0) [text=lightgray] {\small$a_{0,2}$};
  \node (w03) at (5.25,0) {$a_{0,3}$};

  \node (w10) at (0,-0.8) {\small$b_{1}$};
  \node (w11) at (1.75,-0.8) [text=lightgray] {\small$a_{1,1}$};
  \node (w12) at (3.5,-0.8) {\small$a_{1,2}$};

  \node (w20) at (0,-1.6) {\small$b_{2}$};
  \node (w21) at (1.75,-1.6) {\small$a_{2,1}$};

  \node (w30) at (0,-2.4) {\small$b_{3}$};

  \draw[-] (w00) -- (w01);
  \draw[-] (w01) -- (w02);
  \draw[->] (w02) -- (w03);


  \draw[-] (w10) -- (w11);
  \draw[->] (w11) -- (w12);


  \draw[->] (w20) -- (w21);

  \draw[->] (w03) -- (w10);
  
  \draw[->] (w12) -- (w20);
  \draw[->] (w21) -- (w30);
  
\end{tikzpicture}

    \caption{Example of proposed digit decomposition in $\mathbb{Z}_{p^4}$.}
    \label{fig:digit decomp prop example}
\end{figure}
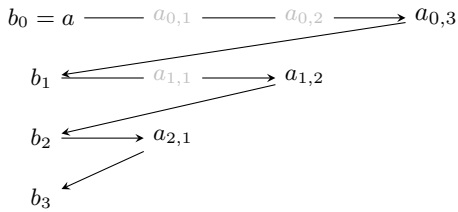

As illustrated in Figure~\ref{fig:digit decomp example}, when $r=4$, the Halevi-Shoup method requires a total of six evaluations of the polynomial $F_p$. If the Chen-Han method is used, the cost becomes three evaluations of $F_p$ together with one evaluation each of $G_{p,4}$, $G_{p,3}$, and $G_{p,2}$. If the Geelen et~al.\ method is used, the cost becomes three evaluations of $G_{p,2}$, two evaluations of $G_{p,3}$, and one evaluation of $G_{p,4}$. In comparison, our method avoids all intermediate computations by directly applying a single $G_{p,e}$ polynomial at each step. This results in a substantial reduction in the number of evaluations. For $r=4$, this saves three evaluations of $F_p$ in total.

In the general case, the Halevi-Shoup method requires $O(r^2)$ evaluations of $F_p$, the Chen-Han method requires $O(r)$ evaluations of $F_p$ plus $O(r)$ evaluations of $G_{p,e}$, and the Geelen et~al.\ method requires $O(r^2)$ evaluations of $G_{p,e}$, {since all intermediate values are required for subsequent rows}. By contrast, our approach reduces the cost to exactly $r$ evaluations of $G_{p,e}$, one per row. Thus, our method achieves lower computational overhead and improved efficiency compared to Halevi-Shoup, Chen-Han, and Geelen et~al., although it does not reduce the multiplicative depth relative to those methods.

\subsubsection*{Switch to the digit space $\mathbb{Z}_p$}
The digits obtained from the above decomposition process have different plaintext moduli, because the plaintext modulus is reduced only during the division-by-$p$ operation in the vertical direction. We need to unify the plaintext space of these digits to $\mathbb{Z}_p$ before evaluating comparison operations. For example, in Figure~\ref{fig:digit decomp example} and Figure~\ref{fig:digit decomp prop example}, $a_{0,3}$ represents the LSD of the input $a$ and has modulus $p^4$. The next three digits, $a_{1,2}$, $a_{2,1}$, and $b_3=a_{3,0}$, have moduli $p^3$, $p^2$, and $p$, respectively. {More formally, at the message level, the division-by-$p$ operation is defined as}
\[
\mathsf{DivideByP}(kp) = k \pmod{p^{t-1}},
\]
{for some $k \in \mathbb{Z}_{p^{t-1}}$ and $t>1$. In the actual procedure, the input is a ciphertext encrypting a value congruent to $kp \pmod{p^t}$, and the output is a ciphertext encrypting $k \pmod{p^{t-1}}$}. Thus, both the message and the plaintext modulus are reduced by a factor of $p$~\cite{chen2018boot,halevi2021boot}. This homomorphic division-by-$p$ is realized by multiplying the ciphertext by the constant $p^{-1} \pmod{Q}$~\cite{chen2018boot,halevi2021boot}.

Now that the digits have different plaintext moduli, we need to reduce the plaintext modulus without reducing the message. Therefore, we define a function $\mathsf{ChangeModToP}$ as:
\[
\mathsf{ChangeModToP}(k) = k \pmod{p},
\]
which takes as input an encryption of $k \pmod{p^t}$ and returns $k \pmod{p}$, for some $k \in \mathbb{Z}_{p^{t}}$. In other words, it sets the plaintext modulus of the ciphertext from $p^t$ to $p$ without changing the content of the ciphertext. The correctness of $\mathsf{ChangeModToP}$ follows from Lemma~\ref{lm: mod}.
\begin{lemma}
\label{lm: mod}
For every prime \( p \) and integers \(1\leq m\leq n\), if $a\equiv b \pmod{p^n}$, then $a \equiv b \pmod{p^m}$.
\end{lemma}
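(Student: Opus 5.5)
The approach is to unwind the definition of congruence and use the fact that $p^m$ divides $p^n$ whenever $m\le n$.

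First, from $a\equiv b \pmod{p^n}$ we know that $p^n \mid (a-b)$, so there is an integer $k$ with $a-b = k p^n$.

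Second, since $1\le m\le n$, the difference $n-m$ is a nonnegative integer. Hence $p^{n-m}$ is an integer and $p^n = p^m\cdot p^{n-m}$.

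Substituting gives
\[
a-b = \bigl(k\,p^{n-m}\bigr)\,p^m ,
\]
so $p^m \mid (a-b)$. By definition this is $a\equiv b \pmod{p^m}$.

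The only point needing care is that $n-m\ge 0$, so that $k\,p^{n-m}$ is genuinely an integer. This is exactly where the hypothesis $m\le n$ enters. Primality of $p$ plays no role, and the same argument works for any integer base.

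The lemma justifies $\mathsf{ChangeModToP}$ as follows. Decryption (Algorithm~\ref{alg:decryption_algo}) gives $\mathbf{c}_0+\mathbf{c}_1\mathbf{s}\equiv k \pmod{p^t}$ coefficient-wise. Taking $n=t$ and $m=1$, the same ciphertext decrypts to $k \bmod p$ when it is read with plaintext modulus $p$. Thus the ciphertext itself does not need to change.
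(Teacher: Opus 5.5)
Your proof is correct and is essentially the paper's argument: both go from $p^n \mid (a-b)$ to $p^m \mid (a-b)$ using $p^m \mid p^n$ for $m \le n$, and you simply write out the witness $k\,p^{n-m}$ explicitly. Your observations are also right: primality of $p$ is not needed, and your closing reading of $\mathsf{ChangeModToP}$ agrees with how the paper uses the lemma.
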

\begin{proof}
Since $a\equiv b \pmod{p^n}$, we have that $p^n$ divides $a-b$. Because $m\le n$, it follows that $p^m$ divides $p^n$, hence $p^m$ divides $a-b$ and therefore $a \equiv b \pmod{p^m}$.
\end{proof}
In other words, Lemma~\ref{lm: mod} ensures that if we have the correct base-$p$ digits modulo $p^t$ for any $t \geq 1$, then those digits are also correct modulo $p$. Operationally, this does not change the ciphertext itself; it only affects the decryption step in Algorithm~\ref{alg:decryption_algo}, particularly line~2, where the final reduction is taken modulo the current plaintext modulus $p$ rather than $p^r$. This is consistent with the BGV relation
\[
\mathbf{c}_0 + \mathbf{c}_1 \cdot \mathbf{s} \;\equiv\; \mathbf{m} + p^{r} \mathbf{e} = \mathbf{m} + p\cdot \left(p^{r-1}\right) \mathbf{e} \pmod{Q}.
\]
from Eq.~\eqref{eq: bgv}. With Lemma~\ref{lm: mod} and $\mathsf{ChangeModToP}$, we can reduce the plaintext modulus of all digits to $p$.

For example, given an input $a \in \mathbb{Z}_{p^4}$, we know its LSD is $a_0 \in \mathbb{Z}_{p}$. As shown in Figure~\ref{fig:digit decomp example} and Figure~\ref{fig:digit decomp prop example}, by evaluating either the polynomial $F_p$ or $G_{p,e}$, we can obtain $a_{0,3} = a_0 \pmod{p^4}$. By Lemma~\ref{lm: mod}, we know $a_{0,3} = a_0 \pmod{p^m}$ for any $1 \leq m \leq 4$. Thus, $a_{0,3} = a_0 \pmod{p}$. Accordingly, we can obtain the digit $a_0$ modulo $p$ by directly reducing the modulus of $a_{0,3}$ to $p$. This is achieved by calling $\mathsf{ChangeModToP}(a_{0,3})$. Similarly, we have $a_1 = \mathsf{ChangeModToP}(a_{1,2})$ and $a_2 = \mathsf{ChangeModToP}(a_{2,1})$. By integrating the digit decomposition with $\mathsf{ChangeModToP}$, we obtain the reduction procedure from FV over $\mathbb{Z}_{p^r}$ to $\mathbb{Z}_{p}$, as shown in Algorithm~\ref{alg:reduction}.




 

\begin{algorithm}[t]
\caption{Reduction from $\mathbb{Z}_{p^r}$ to $\mathbb{Z}_{p}$}
\label{alg:reduction}
\SetAlgoLined
\LinesNumbered
\KwIn{$a \in \mathbb{Z}_{p^r}$}
\KwOut{$\{a_i\}_{i=0}^{r-1}$ with $a_i \in \mathbb{Z}_p$ and $a = \sum_{i=0}^{r-1} a_i p^i$}
$b \leftarrow a$\;
\For{$i \gets 0$ \KwTo $r-2$}{
    $a_i \leftarrow G_{p,\,r-i}(b)$\tcp*{Horizontal reduction}
    $b \leftarrow \mathsf{DivideByP}\!\big(b - a_i\big)$\tcp*{Vertical reduction}
}
$a_{r-1} \leftarrow b$\;
\For{$i \gets 0$ \KwTo $r-1$}{
    $a_i \leftarrow \mathsf{ChangeModToP}(a_i)$\tcp*{Modulus reduction to $\mathbb{Z}_p$}
}
\Return $\{a_i\}_{i=0}^{r-1}$\;
\end{algorithm}

\subsubsection*{Complexity of the reduction function}
As described in Algorithm~\ref{alg:reduction}, the reduction from $\mathbb{Z}_{p^r}$ to $\mathbb{Z}_p$ consists of three steps: horizontal reduction, vertical reduction, and modulus reduction. The vertical reduction involves only subtraction and division by $p$, where in FV-style HE the division by $p$ is implemented as a scalar multiplication. Since subtraction and scalar multiplication incur negligible cost~\cite{chen2018boot}, their overhead is marginal. The modulus reduction step operates only on the plaintext modulus of the message and does not require any ciphertext evaluation, and can therefore be considered essentially free. Consequently, the dominant cost of the reduction function lies in the horizontal reduction.  

The horizontal reduction requires evaluating the polynomial $G_{p,r-i}(\cdot)$. Lemma~\ref{lm:ge} guarantees that such a polynomial always exists with degree $(p-1)(e-1)+1$. Moreover, $G_{p,e}(x)$ can be expressed as $x \cdot H(x^2)$ or $H(x^2)$, where the degree of $H$ is half the degree of $G_{p,e}$. This implies that evaluating $G_{p,e}$ requires $O\!\left(\sqrt{(p-1)(e-1)+1}\right)$ multiplications using the Paterson-Stockmeyer algorithm, and consumes $O\!\left(\log((p-1)(e-1)+1)\right)$ multiplicative depth.  

In Algorithm~\ref{alg:reduction}, reducing the plaintext space of $a \in \mathbb{Z}_{p^r}$ to its digit representation $\{a_i\}_{i=0}^{r-1} \in \mathbb{Z}_p$ involves $r$ rows in the horizontal direction. The last element in each row, $a_{i,r-i-1}$, corresponds to the $i$-th digit of the input $a$. In each row, the polynomial $G_{p,r-i}$ is evaluated once to compute $a_{i,r-1}$. Thus, the reduction requires exactly one evaluation of each $G_{p,r-i}$, for a total of $\sum_{i=0}^{r-1} \sqrt{(p-1)(r-i-1)+1}$ multiplications.  

The multiplicative depth of the reduction is determined by the longest computation path, which corresponds to traversing all digits as illustrated in Figure~\ref{fig:digit decomp prop example}. Along this path, there are $r-1$ evaluations of $G_{p,e}$, yielding a depth of $\sum_{i=0}^{r-1} \log\!\big((p-1)(r-i-1)+1\big)$.

\subsection{Efficient Comparison in $\mathbb{Z}_{p}$}
\label{sec:method befv}

Given the reduction from $\mathbb{Z}_{p^r}$ to $\mathbb{Z}_p$ (Section~\ref{sec:method decomp}), we evaluate comparisons by applying standard per-digit tests in the digit space $\mathbb{Z}_p$ and then aggregating the digit results. Unlike prior work that centers comparisons inside special plaintext encodings, we use interpolation polynomials only as \emph{per-digit predicates} in $\mathbb{Z}_p$, while keeping arithmetic in $\mathbb{Z}_{p^r}$ and avoiding special encodings.

Typical comparison functions include equality (EQ), inequality (NEQ), less-than (LT), less-than-or-equal (LE), greater-than (GT), and greater-than-or-equal (GE). In this work, however, we focus on constructing only EQ and LT homomorphically, since GT can be derived in the same manner as LT, while NEQ, GE, and LE can be obtained from simple transformations: $\text{NEQ} = 1 - \text{EQ}$, $\text{GE} = 1 - \text{LT}$, and $\text{LE} = 1 - \text{GT}$. Consistent with~\cite{tan2020efficientsefv,iliashenko2021fasterbgvsefv}, we rely on univariate interpolation, which is faster than bivariate in homomorphic settings. To make this possible, we restrict comparisons to zero, since arbitrary comparisons such as $a < b$ can always be reduced to checking whether $a-b < 0$.

\subsubsection*{Digit-wise comparison}
The EQ function is defined as $EQ(x,0)=1$ for $x=0$ and $EQ(x,0)=0$ otherwise. By Theorem~\ref{lm: fermat}, we can express it as $F_{EQ}(x,0)=1-x^{p-1}$. This follows directly from Fermat’s little theorem and makes EQ straightforward and efficient.

In contrast, constructing a polynomial for LT is not as trivial as in the case of EQ. The LT function is defined as $LT(x, 0) = 1$ for $x < 0$ and $LT(x, 0) = 0$ otherwise. The polynomial for LT can be derived using a univariate interpolation polynomial over $\mathbb{Z}_p$~\cite{iliashenko2021fasterbgvsefv}, resulting in
\[F_{LT}(x,0) = \frac{p+1}{2} x^{p-1} + \sum_{\substack{i=1,\text{odd}}}^{p-2} c_i x^i,\]
where $c_i = \sum_{j=1}^{(p-1)/2} j^{p-1-i}$. Similarly, the polynomial $F_{GT}(x)$ for the GT function can be derived. Unlike EQ, the LT construction requires higher-degree interpolation and more terms, making it computationally more expensive. Both $F_{EQ}$ and $F_{LT}$ have the same degree of $p-1$. We note that both $F_{EQ}$ and $F_{LT}$ are defined over $\mathbb{Z}_p$.

\subsubsection*{Digit-wise evaluation and aggregation}
The polynomials $F_{LT}$ and $F_{EQ}$ defined in the previous paragraph are over $\mathbb{Z}_p$ and therefore cannot be applied directly to inputs in $\mathbb{Z}_{p^r}$. Suppose we want to compare a number $z \in \mathbb{Z}_{p^r}$ with $0$. We first apply the reduction function to map $z$ to its digit representation $\{z_i\}_{i=0}^{r-1} \in \mathbb{Z}_p$. Since each digit now lies in $\mathbb{Z}_p$, we can evaluate $F_{LT}$ or $F_{EQ}$ on each digit individually. By aggregating the per-digit results, we obtain the comparison result for the original number $z$. The aggregation can be defined as
\[
\prod_{i=0}^{r-1} F_{EQ}(z_i,0)
\]
for the EQ case, or
\[
\sum_{i=0}^{r-1} F_{LT}(z_i,0)\,\prod_{j=i+1}^{r-1} F_{EQ}(z_j,0)
\]
for the LT case. The result of the aggregation step lies in $\mathbb{Z}_p$.

\subsubsection*{Comparing two numbers in $\mathbb{Z}_{p^r}$}
Suppose we have two integers $a,b \in \mathbb{Z}_{p^r}$ and want to test whether $a<b$. Section~\ref{sec:method limit logic fv} showed that comparison directly over $\mathbb{Z}_{p^r}$ is infeasible for $r>1$. To overcome this, we first apply the reduction process, allowing us to use polynomials defined over the digit space $\mathbb{Z}_p$. Specifically, we compute the difference $d=a-b$ in $\mathbb{Z}_{p^r}$, reduce $d$ into its base-$p$ digits in $\mathbb{Z}_p$, and then compare each digit to zero using $F_{LT}$ and $F_{EQ}$. Finally, the per-digit results are aggregated to obtain the overall comparison between $a$ and $b$. The full procedure is summarized in Algorithm~\ref{alg:compare ab} for the LT operation.

\begin{algorithm}[t]
\SetAlgoLined
\LinesNumbered
\caption{LT Comparison in $\mathbb{Z}_{p^r}$}
\label{alg:compare ab}
\KwIn{integers $a \bmod p^r$ and $b \bmod p^r$}
\KwOut{comparison result in $\mathbb{Z}_p$}

$d \leftarrow a - b$ \;
$\textit{digits} \leftarrow \mathsf{Reduce}(d, p, r)$ \;

\For{$i \gets 0$ \KwTo $r-1$}{
    $LT_i \leftarrow F_{\mathrm{LT}}(\textit{digits}_i, 0)$ \tcp*{Digit-wise comparison}
    $EQ_i \leftarrow F_{\mathrm{EQ}}(\textit{digits}_i, 0)$\;
}

$LT \leftarrow LT_{r-1}$ \;
$EQ \leftarrow EQ_{r-1}$ \;

\For{$i \gets r-2$ \KwTo $0$}{
    $LT \leftarrow LT + EQ \cdot LT_i$\tcp*{Aggregation}
    $EQ \leftarrow EQ \cdot EQ_i$\;
}

\Return $LT$\;
\end{algorithm}

The comparison computation in $\mathbb{Z}_p$ is dominated by the evaluation of interpolation polynomials of degree $p-1$. For $r$ digits, this costs $r\sqrt{p-1}$ non-scalar multiplications and consumes $\log p$ multiplicative depth. The aggregation step uses only $O(r)$ additions and multiplications, which can be performed efficiently.

\subsection{Modulus Raising from $\mathbb{Z}_p$ to $\mathbb{Z}_{p^r}$}
\label{sec:method lift}

Let $b$ denote the result after the aggregation step (Section~\ref{sec:method befv}). Since $b \in \mathbb{Z}_p$, the output of the comparison computation still resides in the $\mathbb{Z}_p$ domain. To enable further arithmetic operations, we propose a function that homomorphically maps $b \bmod p$ in $\mathbb{Z}_p$ back to $b \bmod p^r$ in FV over $\mathbb{Z}_{p^r}$, which we call \emph{modulus raising}.

{To perform this modulus raising, the first step is to enlarge the plaintext modulus of the underlying message. Given an encryption of $k \pmod{p}$, this step produces $\bar{k} \pmod{p^r}$, where only the modulus has been extended to $p^r$. However, in general $\bar{k} \neq k \pmod{p^r}$. Simply enlarging the modulus ensures that the least significant digit (LSD) of $\bar{k}$ matches $k$, while the higher-order digits remain undetermined. In other words, the reverse of Lemma~\ref{lm: mod} does not hold: although $\bar{k}$ is congruent to $k$ modulo $p$, the equality does not extend to modulo $p^r$.}

{To recover the correct value $k \pmod{p^r}$, we must clear the higher digits of $\bar{k}$. This task is equivalent to extracting the LSD of $\bar{k}$, which can be accomplished through horizontal reduction in Algorithm~\ref{alg:reduction}, restricted to the first row. More directly, we define
\[
\mathsf{RaiseModToP2R}(k) = G_{p,r}(\bar{k}) \pmod{p^r},
\]
which can be viewed as the inverse of $\mathsf{ChangeModToP}$. Thus, $\mathsf{RaiseModToP2R}$ internally performs modulus extension followed by the evaluation of $G_{p,r}$, and directly outputs the canonical lifted value in $\mathbb{Z}_{p^r}$:
\[
k = G_{p,r}(\bar{k}), \quad k \in \mathbb{Z}_{p^r}.
\]}
This ensures that the resulting value is correctly represented in the FV domain over $\mathbb{Z}_{p^r}$.

\subsection{Overall Complexity}
\label{sec:method complexity}
Among all available operations in FHE, non-scalar (ciphertext-ciphertext) multiplication is the costliest primitive. Therefore, it is essential to reduce the number of non-scalar multiplications.

The LT evaluation for direct comparison in $\mathbb{Z}_{p'}$ where $p' \approx p^r$ requires $O(\sqrt{p'}) \approx O(\sqrt{p^{r}}) = O(p^{r/2})$ non-scalar multiplications. Our \emph{space switching} instead begins with digit extraction: extracting the $i$-th digit evaluates a polynomial of degree $(p-1)(r-i-1)+1$, costing $O\left(\sqrt{(p-1)(r-i-1)+1}\right) \approx O(\sqrt{pr})$ multiplications, so extracting all $r$ digits costs about $O(r\sqrt{pr})$. Per-digit comparison then costs $O(\sqrt{p})$ each, for a total of $O(r\sqrt{p})$. Aggregating the per-digit results uses $O(r)$ non-scalar multiplications. Finally, the modulus-raising step costs $O\left(\sqrt{(p-1)(r-1)+1}\right) \approx O(\sqrt{pr})$. Summing these terms yields
\[
O\left(r\sqrt{pr} + r\sqrt{p} + r + \sqrt{pr}\right) \approx O\!\left(r\sqrt{pr}\right),
\]
so \emph{space switching} reduces the non-scalar multiplication count from $O\!\left(p^{r/2}\right)$ to $O\!\left(r\sqrt{pr}\right)$.

Let $w = r \log_2 p$ be the bit-width. As $w$ increases, direct comparison costs $O(p^{r/2})$. This grows exponentially in $r$ when $p$ is fixed, and as $p^{r/2}$ when $r$ is fixed. In contrast, \emph{space switching} costs $O(r^{3/2}\sqrt{p})$ up to constants. For small fixed $p$ (e.g., $p \le 17$), the direct comparison cost explodes with $r$, while our \emph{space switching} cost grows roughly linearly in $r$ with an additional $\sqrt{p}$ factor. For fixed $r$, increasing $p$ raises our cost by $\sqrt{p}$, whereas the direct method grows as $p^{r/2}$. This gap explains why \emph{space switching} dominates at larger bit-widths.

\section{Experiments}
\label{s:expt}

In this section, we describe the experimental setup and evaluation methodology. Section~\ref{sec:expt setting} presents the platforms, hardware environment, and software dependencies. Section~\ref{sec:expt workloads} outlines the workloads used in our evaluation, and Section~\ref{sec:expt schemes} introduces the schemes against which we compare.

\subsection{Implementation Details}
\label{sec:expt setting}

\textbf{System Setup.} We implement our work on HElib~\cite{halevi2014helib} and the NTL library~\cite{shoup2001ntl}. We mainly leverage the BGV scheme~\cite{brakerski2014bgv} as the underlying HE scheme. We also adapted the implementation of~\cite{geelen2023boot} for efficient polynomial evaluation. All experiments are conducted on a server with Intel(R) Xeon(R) W-2235 equipped with 128GB RAM running at 3.80GHz. All parameters for BGV are set to ensure $\lambda > 120$, i.e., at least $120$-bits security unless otherwise specified. The security level can be approximated by the LWE estimator~\cite{albrecht2015lwe}. {All experiments are single-threaded, consistent with~\cite{geelen2023boot,iliashenko2021fasterbgvsefv}.} The code can be accessed in\textcolor{blue}{~\href{https://github.com/UCF-Lou-Lab-PET/Universal-BGV}{this repository}}.


\subsubsection*{Parameter Setup}
We set the encryption parameters for all experiments according to the following strategies.

{\em Fixed parameters}. We set the Hamming weight of the secret key as $t=120$ for all experiments. The Hamming weight represents the number of non-zero coefficients in the secret key. We set the columns of the key-switching matrix to $c=2$. We align these two parameters with previous works~\cite{tan2020efficientsefv, iliashenko2021fasterbgvsefv, morimura2023accelerating}.
    
    {\em The plaintext modulus}. The plaintext modulus in the FV space is determined by $p^r$. We can choose $p$ and $r$ according to the input bit-width $b$ such that $p^r$ has a size of at least $b$ bits. The choice of $p$ and $r$ determines both the complexity of the plaintext switching and the complexity of comparison operations as it affects the polynomials $F_p$ and $G_{p,r}$.

    {\em The ciphertext modulus}. The ciphertext modulus $Q$ affects the multiplicative depth, that is the maximum number of non-scalar multiplications an HE instance can support before decryption fails due to noise~\cite{halevi2021boot,tan2020efficientsefv,chen2017seal}. Therefore, $Q$ is chosen to match the depth of the evaluation circuit.
    
    {\em The degree of polynomial ring}. Once we have chosen $p$, $r$ and $Q$, we can choose an appropriate degree of polynomial to meet the security level $\lambda > 120$. By setting the cyclotomic order of the polynomial ring $m$, at the same time we also set the degree of polynomial ring $n$, the order of the base prime $d = \textsf{Ord}(m,p)$, and the number of SIMD slots $\ell=n/d$.

    

\subsection{Workloads Evaluated}
\label{sec:expt workloads}


We evaluate the performance of our proposed method across four aspects. First, we analyze the runtime of comparison operations, focusing on the LT function over inputs of different bit-widths. Second, we assess its effectiveness on database workloads using the TPC-H benchmark suite~\cite{tpc2022tpcbenchmarks}, specifically Query~6. This query involves four columns, each encrypted into a single ciphertext, resulting in four ciphertexts in total. All columns are represented as 16-bit integers. The query includes two core operations: \textit{filtering}, which applies comparison conditions to select rows, and \textit{aggregation}, which computes summary values such as \texttt{SUM} or \texttt{COUNT} over the filtered results. To ensure compatibility with our encrypted framework, we preprocess the dataset so that all values are represented as integers, and we select parameters so that every ciphertext can pack all rows (i.e., the number of slots exceeds the dataset size). Third, we provide a runtime breakdown to identify the relative cost of each sub-operation in our method. Finally, we perform a sensitivity study by examining LT performance across varying bit-widths.

\subsection{Schemes Evaluated}
\label{sec:expt schemes}
We compare three approaches that support both arithmetic and comparison operations on encrypted data, all producing exact results without approximation errors: space switching (our work), scheme switching, and the work from~\cite{morimura2023accelerating} which we refer to as \textit{direct comparison}. We exclude schemes that do not give exact results, such as polynomial approximation, because the workloads that we test require precise comparison results.

Space switching is designed to support both arithmetic operations over $\mathbb{Z}_{p^r}$ and comparison operations natively within the same scheme, avoiding the need to alternate between arithmetic- and bit-oriented schemes. By contrast, scheme switching~\cite{boura2020chimera,lu2021pegasus,bian2023he3db} combines a word-wise HE scheme with a bit-wise scheme, incurring overhead from ciphertext conversions.


Direct comparison~\cite{morimura2023accelerating} evaluates both arithmetic and comparisons in a single arithmetic HE scheme over $\mathbb{Z}_p$. This avoids conversions but requires $p$ to grow exponentially with bit-width; as $p$ increases, the interpolation degree ($p{-}1$) inflates multiplicative depth, noise, and evaluation cost.



\section{Evaluation Results}
\label{s:results}


\subsection{Performance on the LT Operator}
\label{sec:expt comp perf}
We compare our \emph{space switching} against direct comparison~\cite{morimura2023accelerating} and HE$^3$DB~\cite{bian2023he3db}, which we adopt as the representative scheme-switching method. Figure~\ref{fig:expt-LT} presents the runtime of space switching in comparison with scheme switching and direct comparison for the LT operator across 8- to 20-bit inputs. The parameters for space switching and direct comparison are shown in Table~\ref{tab:param ss and dc}, while parameters for scheme switching follow~\cite{bian2023he3db}. Runtimes are reported as amortized per-slot runtime.

\begin{table}[h]
    \centering
    \caption{Parameters for space switching and direct comparison.}
    \label{tab:param ss and dc}
    \begin{tblr}{
      colspec = {|c|c|c|c|},
      row{1} = {font=\fontsize{8}{11},rowsep=1pt},
      row{2-Z} = {font=\fontsize{8}{11},rowsep=1pt},
    }
    \hline
    \textbf{Bit-} & \SetCell[r=2]{c}\textbf{$\log Q$} & \textbf{Space Switching} & \textbf{Direct Comparison} \\
    \textbf{width} & & $(p^r,m,\ell)$ & $(p,m,\ell)$ \\\hline\hline
       8  & 376  & $(7^3,17195,1080)$ & $(257,30256,960)$ \\
       12 & 672  & $(7^5,30025,3000)$ & $(4099,34833,2728)$ \\
       16 & 840  & $(41^3,59595,3808)$ & $(65537,50643,3960)$ \\
       20 & 1280 & $(37^4,46981,8540)$ & $(1048583,39872,9968)$ \\\hline
    \end{tblr}
\end{table}

\begin{figure}[h]
    \centering
    \includegraphics[width=\linewidth]{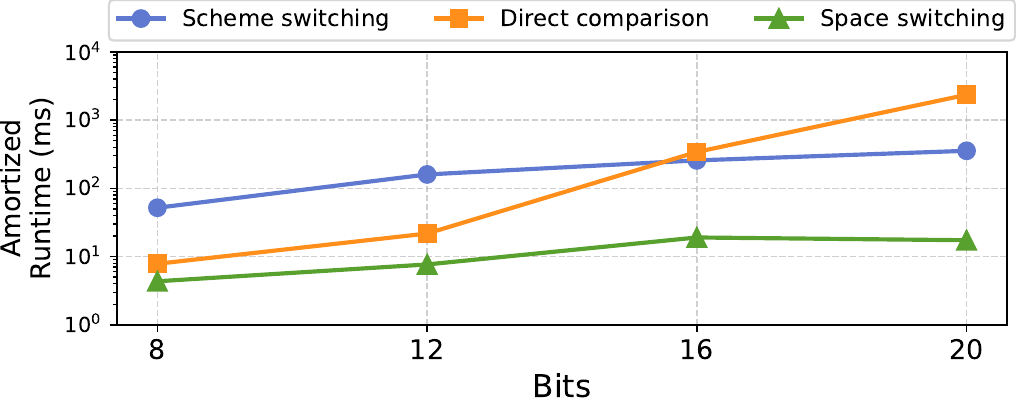}
    \caption{Amortized runtime comparison for the LT operator across Scheme Switching, Direct Comparison, and Space Switching.}
    \label{fig:expt-LT}
\end{figure}



For the LT operator (Figure~\ref{fig:expt-LT}), space switching requires 4.34 ms and 7.66 ms for 8- and 12-bit inputs, respectively. These values represent up to a 3$\times$ speedup over direct comparison and a 21$\times$ speedup over scheme switching. In this lower bit-width range, direct comparison runs faster than scheme switching, but both remain significantly slower than space switching. As the bit-width increases to 16 and 20, the trend reverses: direct comparison incurs very high cost, exceeding even scheme switching, while space switching maintains runtimes of 19.05 ms and 17.36 ms, corresponding to 18$\times$-136$\times$ improvements over direct comparison and 13$\times$-20$\times$ over scheme switching. This indicates that the relative efficiency of direct comparison is limited to small bit-widths, whereas space switching remains the most scalable approach, consistent with the complexity analysis in Section~\ref{sec:method complexity}.



\subsection{Database Workload Evaluation Results}
\label{sec:expt db results}

For the database workload, we compare space switching with two alternative methods, direct comparison and scheme switching, in the database workload with varying numbers of rows. The results are shown in Figure~\ref{fig:expt-DB-Q6}.

\begin{figure}[h]
\centering
\includegraphics[width=\linewidth]{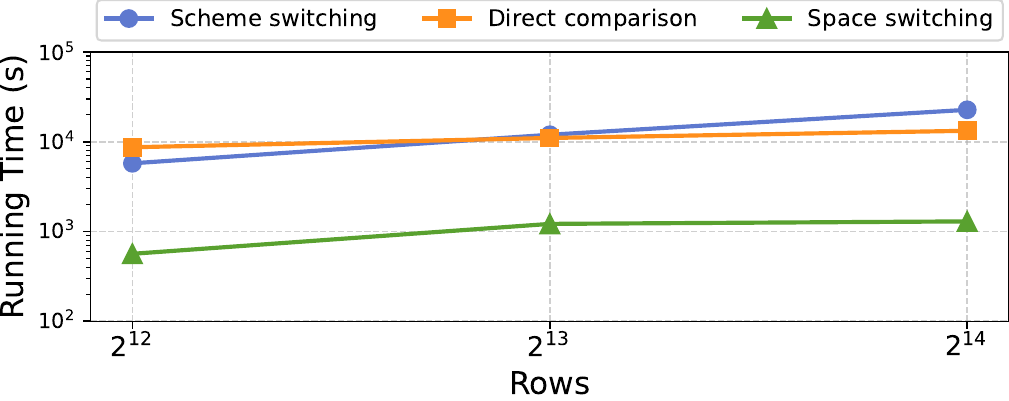}
\caption{Runtime comparison for the TPC-H Benchmark~\cite{tpc2022tpcbenchmarks} Query 6 across scheme switching, direct comparison, and space switching.}
\label{fig:expt-DB-Q6}
\end{figure}


\begin{table*}[t]
\centering
\caption{The runtime breakdown of the LT evaluation. Time is in seconds.}
\setlength{\tabcolsep}{3pt}            
\renewcommand{\arraystretch}{1}     
\begin{tblr}{
  colspec = {c c c c c c c},
  row{1} = {font=\bfseries},
  row{2-Z} = {rowsep=1pt},
}
\hline
\textbf{Bit-width} & \textbf{(p, r)} & \textbf{Reduction} & \textbf{Comparison} & \textbf{Aggregation} & \textbf{Modulus Raising} & \textbf{Total} \\
\hline\hline
\textbf{8}  &(7, 3)  & 1.89 & 1.65 & 0.44 & 0.71 & 4.69\\
\textbf{12} &(7, 5)  & 12.06 & 6.45 & 2.21 & 2.27 & 22.99\\
\textbf{16} &(41, 3) & 30.32 & 29.04 & 2.51 & 10.69 & 72.56\\
\textbf{20} &(37, 4) & 71.65 & 54.6 & 5.77 & 16.26 & 148.28\\
\hline
\end{tblr}%
\label{tab:breakdown}
\end{table*}

\begin{table*}[htbp]
\centering
\caption{Runtime of the proposed method for LT evaluation across different bit-widths.}

\label{tab:comp perf}
\begin{tblr}{
    colspec = {|c | c | c | c | c | c| c|},
    row{1} = {font=\bfseries},
    row{2-Z} = {rowsep=1pt},
    }
\hline
\textbf{Bit-width} & \textbf{(p, r)} & \textbf{$\log$ Q} & \textbf{(m, n)} & $\mathbf{\ell}$ & \textbf{Total Running Time (s)} & \textbf{Amortized Time (ms)}\\
\hline\hline
\SetCell[r=3]{c}\textbf{8} & (5, 4) & 456 & (12851, 12600) & 840 & 4.68 & 5.57 \\
& (7, 3) & 376 &(17195, 12960) & 1080 & 4.69 & 4.34 \\
& (17, 2) & 432 & (21513, 14000) & 1400 & 6.30 & 4.50 \\
\hline
\SetCell[r=4]{c}\textbf{12} & (7, 5) & 672 & (30025, 24000) & 3000 & 22.99 & 7.66  \\
& (11, 4) & 784 & (44289, 23328) & 3888 & 40.37 & 10.38 \\
& (17, 3) & 672 & (55328, 20736) & 1152 & 38.43 & 33.36 \\
& (67, 2) & 780 & (32259, 21504) & 2688 & 33.26 & 12.37 \\
\hline
\SetCell[r=3]{c}\textbf{16} & (17, 4) & 872 & (56889, 31752) & 756 & 68.98 & 91.24 \\
& (41, 3) & 840 & (59595, 30464) & 3808 & 72.56 & 19.05 \\
& (257, 2) & 896 & (40025, 32000) & 2000  & 79.82 & 39.91 \\
\hline
\SetCell[r=3]{c}\textbf{20} & (17, 5) & 1200 & (83811, 44064) & 7344 & 185.42 & 25.25 \\
& (37, 4) & 1280 & (46981, 42700) & 8540 & 148.28 & 17.36 \\
& (131, 3) & 1280 & (62447, 48600) & 8100 & 223.35 & 27.57 \\
\hline
\end{tblr}
\end{table*}



Direct comparison suffers very high runtime because the interpolation degree grows exponentially with input size: $2^{12}$ rows take 2 hours 24 minutes and $2^{14}$ exceed 3.5 hours. Although it avoids reduction and digit extraction, the bottleneck is evaluating the comparison polynomial of degree $p-1$; for a 16-bit prime $p$ this becomes prohibitively expensive.

Scheme switching performs somewhat better than direct comparison at smaller scales, with runtimes of 1 hour 35 minutes and 3 hours 19 minutes for $2^{12}$ and $2^{13}$ rows, respectively. However, at $2^{14}$ rows, scheme switching becomes significantly worse, taking more than 6 hours, nearly twice as slow as direct comparison. This inefficiency arises from the way comparison operations are handled. Initially, the database consists of $N$ ciphertexts in LWE form, and to evaluate a comparison predicate each ciphertext must be processed individually. Every LWE ciphertext is first converted into a TFHE ciphertext, where the comparison function is executed. After all rows have been processed, the comparison results are repacked into a single RLWE ciphertext to support aggregation queries. This pipeline introduces several layers of overhead: (i) repeated conversions from LWE to TFHE for every ciphertext; (ii) row-by-row evaluation in TFHE, where aggregating multiple comparison outputs through homomorphic AND operations requires repeated bootstrapping-heavy computations; and (iii) repacking from many LWE ciphertexts back into a single RLWE ciphertext, which adds costly external products and key switches. These costs compound with dataset size, explaining why scheme switching scales poorly and even becomes slower than direct comparison at $2^{14}$ rows.

Space switching, by contrast, achieves substantially lower runtime across all tested scales. For $2^{12}$ rows, space switching completes in only 9 minutes 24 seconds, yielding a 15$\times$ speedup over direct comparison and a 10$\times$ speedup over scheme switching. At $2^{13}$ rows, space switching runs in 20 minutes, corresponding to 9$\times$ and 10$\times$ improvements, respectively. For the largest dataset of $2^{14}$ rows, space switching completes in 21.5 minutes, giving a 10$\times$ speedup compared to direct comparison and a 17$\times$ improvement over scheme switching.

Overall, space switching consistently reduces runtime by more than an order of magnitude relative to direct comparison and up to 17$\times$ relative to scheme switching. More importantly, while both baselines show steep runtime growth as the dataset scales, with direct comparison hindered by the cost of evaluating high-degree polynomials and scheme switching burdened by scheme-switching and packing overheads, space switching maintains near-linear growth and remains within practical runtime ranges even for the largest tested dataset. This scalability advantage underscores the suitability of space switching for encrypted query processing at larger scales.


In addition to its runtime advantages, space switching also offers a more favorable memory footprint than scheme switching. Across datasets of size $2^{12}$, $2^{13}$, and $2^{14}$ rows, space switching consumes 6.6 GB, 12.6 GB, and 44 GB of memory, respectively, whereas scheme switching requires 57.5 GB, 62.2 GB, and 62.3 GB. Direct interpolation uses 5.8 GB, 13.6 GB, and 31 GB. A similar trend appears in evaluation key size. Space switching requires 1.3 GB, 2.5 GB, and 8.5 GB of evaluation keys, while scheme switching consistently requires 26.7 GB across all cases. Direct interpolation requires 0.6 GB, 1.9 GB, and 5.3 GB. These results show that space switching achieves substantial runtime improvements without incurring the very large memory and key-size overheads of scheme switching, while remaining within a similar overall resource range as direct interpolation.

\subsubsection*{Larger bit-width}
We also evaluated space switching on 32-bit inputs, where it achieves up to 9$\times$ speedup over scheme switching for $2^{14}$ rows. In contrast, direct comparison becomes impractical at 32-bit because it requires a large 32-bit prime $p$, and constructing the comparison polynomial incurs $O(p^2)$ cost, which is infeasible in practice. Furthermore, space switching completes in 1 hour 18 minutes for 56-bit inputs with $2^{12}$ rows, whereas HE$^3$DB, as a representative scheme-switching method, does not support bit-widths beyond 32-bit.

\subsection{Runtime Breakdown for Computation Stages}
\label{sec:expt breakdown}


We provide a detailed runtime analysis in Table~\ref{tab:breakdown}. Space switching links arithmetic in the number space $\mathbb{Z}_{p^r}$ with comparisons in the digit space $\mathbb{Z}_p$ via four stages: (1) \emph{reduction} from $\mathbb{Z}_{p^r}$ to $\mathbb{Z}_p$, (2) \emph{comparison} in $\mathbb{Z}_p$ via interpolation, (3) \emph{aggregation} of digit-wise results, and (4) \emph{modulus raising} from $\mathbb{Z}_p$ to $\mathbb{Z}_{p^r}$.

As shown in Table~\ref{tab:breakdown}, the relative contributions of these stages shift as the bit-width increases. For 8-bit inputs, the total runtime is 4.69s, with each stage contributing relatively evenly: reduction 1.89s (40\%), interpolation 1.65s (35\%), aggregation 0.44s (9\%), and modulus raising 0.71s (15\%). At 12 bits, the runtime grows to 22.99s, dominated by reduction at 12.06s (52\%) and interpolation at 6.45s (28\%), while aggregation and modulus raising remain modest at 2.21s (10\%) and 2.27s (10\%). With 16-bit inputs, the imbalance becomes stronger: reduction increases to 30.32s (42\% of the 72.56s total) and interpolation to 29.04s (40\%), while aggregation and modulus raising contribute only 2.51s (4\%) and 10.69s (15\%). At 20 bits, the runtime reaches 148.28s, where reduction alone takes 71.65s (48\%), interpolation 54.60s (37\%), aggregation 5.77s (4\%), and modulus raising 16.26s (11\%).  

Two main insights emerge from this breakdown. First, aggregation is consistently lightweight across all bit-widths, remaining under 10\% of the total computation. This reflects its dependence primarily on $r$ and a small number of multiplications, so its contribution stays limited even as the overall workload grows. Second, the dominant costs are the reduction and interpolation steps in the $\mathbb{Z}_{p}$ domain. Both grow significantly as the bit-width increases and together account for 75–85\% of the runtime, depending on the values of $p$ and $r$. Choosing a larger $p$ allows for a smaller $r$, meaning fewer digits to extract, but this also increases the cost of evaluating the comparison polynomial. In contrast, modulus raising shows moderate but steady growth, while aggregation remains negligible. Overall, the primary bottleneck of the proposed method lies in the reduction and interpolation steps, especially as the input size increases.

\subsection{Performance with Varying Bit-Width}
\label{sec:expt varying}

Table~\ref{tab:comp perf} reports the runtime of space switching for comparison operations across different bit-widths. Each row lists the chosen parameter set $(p,r)$, the ciphertext modulus size $\log Q$, the polynomial ring parameters $(m,n)$, the total runtime, and the amortized per-comparison runtime. Space switching achieves amortized latencies as low as 4.34 ms for 8-bit comparisons. For larger bit-widths, the amortized runtime remains within a few to tens of milliseconds, ranging from 7.66 ms for 12-bit inputs, 19.05 ms for 16-bit inputs, and 17.36 ms for 20-bit inputs. The amortized cost is strongly influenced by the order $d = \mathsf{Ord}(p)$ modulo $m$. With $n$ fixed, a smaller $d$ yields a larger SIMD packing factor $\ell$, which allows more comparisons to be processed in parallel, thereby reducing per-operation runtime. We also observe that as the bit-width of the input increases, both the total runtime and the amortized runtime generally increase. This is because larger inputs typically require a larger prime $p$ and exponent $r$. A larger $p$ leads to higher-degree polynomial $G_{p,r}(\cdot)$, as well as the interpolation of comparison polynomials over $\mathbb{Z}_p$. Meanwhile, $r$ determines the number of digits that must be processed: the larger $r$ is, the more the polynomial $G_{p,r}(\cdot)$ need to be evaluated, and the more computation is required in comparison operation over $\mathbb{Z}_{p}$.

While the overall runtime increases with the input bit-width, we can achieve better amortized runtime by choosing appropriate encryption parameters. For example, the amortized runtime on 20-bit input can be as little as 17.36 ms, comparable to or even faster than the 12-bit input. This implies that the choice of the $p$ and $r$ is non-trivial.  When the input bit-width is fixed, we can choose different sets of $(p,r)$. The larger the prime $p$ is, the smaller power $r$ is needed. Smaller $r$ means we need to evaluate the polynomial $G_{p,r}(\cdot)$ less frequently. On the other hand, the larger the prime $p$ is, the higher degree the interpolation polynomial and the polynomial $G_{p,r}$ have. This, in turn, increases the evaluation time. By choosing the proper $p$ and $r$, we can achieve a better runtime.

\subsection{Discussion}

\subsubsection*{CKKS-based approximate comparison}
We additionally measured the runtime of CKKS-based approximate comparison. In this setting, the CKKS inputs can be discretized and scaled to fit the approximation interval. We use the CKKS implementation in OpenFHE~\cite{OpenFHE}, with the LT operator implemented based on Cheon \textit{et al.}~\cite{cheon2020approx}. In CKKS, the number of SIMD slots is always half of the ring dimension $(n/2)$, which significantly reduces the amortized runtime per value. The amortized runtime for evaluating an LT operator is 0.58 ms for 8-bit inputs, 0.93 ms for 12-bit inputs, 1.39 ms for 16-bit inputs, and 1.99 ms for 20-bit inputs. CKKS is generally faster in this setting because it performs only a single polynomial evaluation for the comparison itself, and its output can be directly used in subsequent arithmetic operations without additional processing, whereas space switching requires additional steps, including reduction before digit comparison, followed by aggregation and modulus raising to return to the arithmetic domain. However, CKKS comparison remains approximate: its outputs are not exactly 0 or 1, but values close to them. Although discretization and scaling improve the practical behavior of CKKS-based comparison, they do not eliminate the main issue at the critical point of equality in a strict LT test, since the comparison function remains discontinuous there. In particular, when two inputs are equal, the approximation produces 0.5, whereas for a strict LT operation the expected result should be 0. This approximation error may propagate to subsequent arithmetic or aggregation operations, such as \textsf{SUM} in Figure~\ref{fig:db workflow}, and thus does not provide exact semantics after comparison. Moreover, achieving higher precision in CKKS requires higher-degree polynomials, which increases both runtime and multiplicative depth.

\subsubsection*{Functional bootstrapping approaches}
Functional-bootstrapping-based methods implement comparison through LUT evaluation during refresh and are most effective for relatively small input domains. Their cost scales with the LUT domain size, making them less suitable as bit-width grows. For example, Lee \textit{et al.}~\cite{lee2024functional} report 47.3 s for 9-bit inputs and 172.1 s for 12-bit inputs, and Liu \textit{et al.}~\cite{liu2023amortizedfunctionalbootstrapping} report 6.7 ms for 9-bit inputs and 39.1 ms for 12-bit inputs. These results suggest that functional bootstrapping can be competitive for relatively small input domains, but its cost grows more rapidly as bit-width increases. In contrast, our method increases from 4.34 ms at 8-bit to 7.66 ms at 12-bit. These results illustrate that our digit-level decomposition scales more favorably with bit-width than LUT-based functional bootstrapping. Therefore, while functional bootstrapping is a general approach, our work addresses a different and more scalable operating regime for exact comparison in leveled FV/BGV.
\section{Conclusion}
\label{s:conclusion}

In this work, we presented \emph{space switching}, a method to seamlessly integrate arithmetic and comparison operations within FV-style schemes such as BFV and BGV. While prior research shows that FV-style encryption can efficiently support arithmetic and accurately evaluate comparisons, performing these operations in succession has remained a challenge. Our approach resolves this by separating computations into their natural plaintext spaces, arithmetic in $\mathbb{Z}_{p^r}$ and comparisons in $\mathbb{Z}_p$, and introducing reduction and modulus-raising procedures to transition between them. We conducted extensive experiments across varying bit-widths, showing that space switching surpasses scheme switching by more than 20$\times$ for LT operator and by more than 100$\times$ compared to \emph{direct comparison}. For database workloads, space switching achieves up to 17$\times$ speedup over scheme switching and up to 15$\times$ improvement over direct comparison. We believe this work enhances the universality of FV-style schemes for general-purpose computation, making them more practical for applications such as privacy-preserving machine learning and genomic analysis. Extending space switching to support even more complex functions remains an important direction for future research.
\section{Acknowledgments}
We thank the anonymous reviewers for their valuable feedback and the shepherd for the invaluable guidance provided throughout the revision process. This material is based upon work supported by the National Science Foundation under Grant Nos. CCF-2523407 and CNS-2413232. Any opinions, findings, and conclusions or recommendations expressed in this material are those of the author(s) and do not necessarily reflect the views of the National Science Foundation.

\bibliographystyle{plain}
\bibliography{refs, lou}

\newpage

\appendices
\newpage 


\section{Meta-Review}

The following meta-review was prepared by the program committee for the 2026
IEEE Symposium on Security and Privacy (S\&P) as part of the review process as
detailed in the call for papers.

\subsection{Summary}
This paper proposes ``space switching'', a technique to efficiently support both arithmetic operations and comparisons within a single BFV/BGV homomorphic encryption scheme. The key idea is to convert between a number space $(\mathbb{Z}_{p^r})$ suited for arithmetic and a digit space $(\mathbb{Z}_p^r)$ suited for polynomial-interpolation-based comparison, using digit extraction techniques adapted from the bootstrapping literature and a modulus raising procedure. The approach is evaluated on primitive operators and database workloads, demonstrating significant speedups over scheme-switching-based methods.

\subsection{Scientific Contributions}
\begin{itemize}
\item Addresses a Long-Known Issue. 
\item Provides a Valuable Step Forward in an Established Field.
\end{itemize}

\subsection{Reasons for Acceptance}
\begin{enumerate}
\item Efficiently combining arithmetic and comparison operations within a single FHE scheme is a well-known problem. Prior approaches either require expensive scheme switching between different FHE schemes or rely on polynomial approximations that sacrifice exactness. This paper presents a principled approach that avoids both limitations, achieving up to 15$\times$ speedups over scheme-switching-based methods for comparison operations.
   
\item The paper repurposes digit decomposition techniques, previously used only in BFV/BGV bootstrapping, as a general-purpose tool for bridging arithmetic and comparison within a single scheme. While the underlying digit extraction algorithms build on prior bootstrapping literature, the conceptual insight that these techniques can be applied cost-effectively outside of bootstrapping is valuable and may encourage the community to reconsider the practical utility of digit decomposition more broadly.
\end{enumerate}

\end{document}